\newtheorem{claim}{Claim}
\title{\Large \textbf{Bayesian Network Structure Learning Using Quantum Annealing}}
\author[1,2]{\normalsize Bryan A. O'Gorman}
\author[1]{Alejandro Perdomo-Ortiz}
\author[2]{Ryan Babbush}
\author[2]{\newline Al\'{a}n Aspuru-Guzik}
\author[1]{Vadim Smelyanskiy}
\affil[1]{\normalsize Quantum Artificial Intelligence Laboratory,
NASA Ames Research Center, Moffett Field, CA}
\affil[2]{Department of Chemistry and Chemical Biology, 
Harvard University, Cambridge, MA}
\date{\today}
\begin{document}
\maketitle

\begin{abstract}
We introduce a method for  the problem of learning the structure of a Bayesian network using the quantum adiabatic algorithm. We do so by introducing an efficient reformulation of a standard posterior-probability scoring function on graphs as a pseudo-Boolean function, which is equivalent to a system of 2-body Ising spins, as well as suitable penalty terms for enforcing the constraints necessary for the reformulation; our proposed method requires $\mathcal O(n^2)$ qubits for $n$ Bayesian network variables. Furthermore, we prove lower bounds on the necessary weighting of these penalty terms. The logical structure resulting from the mapping has the appealing property that it is instance-independent for a given number of Bayesian network variables, as well as being independent of the number of data cases.
\end{abstract}

\begin{multicols}{2}

\section{Introduction}
\label{sec:introduction}
Bayesian networks are a widely used probabilistic graphical model in machine learning \cite{koller2009}. A Bayesian network's structure encapsulates conditional independence within a set of random variables, and, equivalently, enables a concise, factored representation of their joint probability distribution. There are two broad classes of computational problems associated with Bayesian networks: inference problems, in which the goal is to calculate a probability distribution or the mode thereof given a Bayesian network and the state of some subset of the variables; and learning problems, in which the goal is to find the Bayesian network most likely to have produced a given set of data. Here, we focus on the latter, specifically the problem of Bayesian network structure learning. Bayesian network structure learning has been applied in fields as diverse as the short-term prediction of solar-flares \cite{yu2010} and the discovery of gene regulatory networks \cite{djebbari2008, friedman2000}. The problem of learning the most likely structure to have produced a given data set, with reasonable formal assumptions to be enumerated later, is known to be \textsc{NP}-complete \cite{chickering1996}, so its solution in practice requires the use of heuristics. 

Quantum annealing is one such heuristic. Though efficient quantum algorithms for certain problems are exponentially faster than their classical counterpart, it is believed that quantum computers cannot efficiently solve \textsc{NP}-complete problems \cite{aaronson2010}. However, there exist quantum algorithms (for problems not known or believed to be \textsc{NP}-complete) that have a provable quadratic speedup over classical ones \cite{grover1996, somma2012}. There is therefore reason to believe quantum-mechanical effects such as tunneling could provide a polynomial speedup over classical computation for some sets of problems. The recent availability of quantum annealing devices from D-Wave Systems has sparked interest in the experimental determination of whether or not the current generation of the device provides such speedup \cite{ronnow2014, venturelli2014}. While there exists prior work related to ``quantum Bayesian networks''\cite{tucci2012} and the ``quantum computerization'' of classical Bayesian network methods \cite{tucci2014}, the results presented here are unrelated.

In this paper, we describe how to efficiently map a certain formulation of \textsc{Bayesian Network Structure Learning} (\textsc{BNSL}) to \textsc{Quadratic Unconstrained Binary Optimization} (\textsc{QUBO}). The \textsc{QUBO} formalism is useful because it is mathematically equivalent to that of a set Ising spins with arbitrary 2-body interactions, which can be mapped to the Ising spins with a limited 2-body interaction graph as implementable by physical quantum annealing devices. Similar mappings have been developed and implemented for lattice protein folding \cite{babbush2014b, perdomo2012}, planning and scheduling \cite{rieffel2014}, fault diagnosis \cite{perdomo2014}, graph isomorphism \cite{gaitan2014}, training a binary classifier \cite{babbush2014a, denchev2013}, and the computation of Ramsey numbers \cite{bian2013}.

To map \textsc{BNSL} to \textsc{QUBO}, we first encode all digraphs using a set of Boolean variables, each of which indicates the presence or absence of an arc (i.e. directed edge), and define a pseudo-Boolean function on those variables that yields the score of the digraph encoded therein so long as it satisfies the necessary constraints. This function is not necessarily quadratic, and so we apply standard methods to quadratize (i.e. reduce the degree to two) using ancillary variables. We then introduce ancillary variables and add additional terms to the pseudo-Boolean function corresponding to constraints, each of which is zero when the corresponding constraint is satisfied and positive when it is not. The resulting \textsc{QUBO} instance is defined over $\mathcal O(n^2)$ Boolean variables when mapped from a \textsc{BNSL} instance with $n$ Bayesian network variables. Interestingly, the structure of the \textsc{QUBO} is instance-independent for a fixed \textsc{BNSL} size. Because embedding the structure of \textsc{QUBO} into physical hardware is generally computationally difficult, this is an especially appealing feature of the mapping.

We also show sufficient lower bounds on penalty weights used to scale the terms in the Hamiltonian that penalize invalid states, like those containing a directed cycle or with parent sets larger than allowed. In a physical device, setting the penalty weights too high is counterproductive because there is a fixed maximum energy scale. The stronger the penalty weights, the more the logical energy spectrum is compressed, which is problematic for two reasons: first, the minimum gap, with which the running time of the algorithm scales inversely, is proportionally compressed, and, second, the inherently limited precision of a physical device's implementation of the interaction strengths prevents sufficient resolution of logical states close in energy as the spectrum is compressed. 

The utility of the mapping from \textsc{BNSL} to \textsc{QUBO} introduced here is not limited to quantum annealing. Indeed, the methods used here were motivated by a previous mapping of the same problem to weighted \textsc{MAX-SAT} \cite{cussens2008}. Existing simulated annealing code is highly optimized \cite{isakov2014} and may be applied to \textsc{QUBO} instances derived from our mapping. In that case, there is no need to quadratize, because simulated annealing does not have the limitation to 2-body interactions that physical devices do. With respect to penalty weights, while simulated annealing does not have the same gap and precision issues present in quantum annealing, there may still be reason to avoid setting the penalty weights too high. Because the bits corresponding to arcs with different i.e. terminal vertices do not interact directly, many valid states are separated by invalid ones, and so penalty weights that are too strong may erect barriers that tend to produce basins of local optima. While simulated annealing directly on digraph structures is possible, mapping to \textsc{QUBO} and performing simulated annealing in that form has the advantage that it enables the exploitation of existing, highly optimized code, as well as providing an alternative topology of the solution space and energy landscape.

\textsc{BNSL} has a special property that makes it especially well-suited for the application of heuristics such as QA: Unlike in other problems where
anything but the global minimum is undesirable or those in which an
approximate solution is sufficient, in \textsc{BNSL} there is utility in
having a set of high scoring DAGs.
The scoring function encodes the posterior
probability, and so sub- but near-optimal solution may be almost as probable
as the global optimum.
In practice, because quantum annealing is an inherently stochastic procedure,
it is run many times for the same instance, producing a set of low-energy states. In cases where the BN structure is learned for the purpose of doing inference on it, a high-scoring subset of many quantum annealing runs can utilized by performing Bayesian model averaging, in which inference is done on
the set of likely BNs and the results averaged proportionally.

In Section \ref{sec:background}, we review the formalism of Bayesian networks and \textsc{BNSL} (\ref{sec:bnsl}) and quantum annealing (\ref{sec:qa}), elucidating the features that make the latter suitable for finding solutions of the former. 
In Section \ref{sec:mapping}, we develop an efficient and instance-independent mapping from \textsc{BNSL} to \textsc{QUBO}. In Section \ref{sec:penalty-weights}, we provide sufficient lower bounds on the penalty weights in the aforementioned mapping. In Section \ref{sec:conclusion}, we discuss useful features of the mapping and conclude. In the Appendix, we prove the sufficiency of the lower bounds given; the methods used to do so may be useful in mappings for other problems.

\section{Background}
\label{sec:background}

\subsection{\textsc{Bayesian Network Structure Learning}}
\label{sec:bnsl}

A Bayesian network (BN) is a probabilistic graphical model for a set of random variables that encodes their joint probability distribution in a more compact way and with fewer parameters than would be required otherwise by taking into account conditional independences among the variables. It consists of both a directed acyclic graph (DAG) whose vertices correspond to the random variables and an associated set of conditional probabilities for each vertex.  Here and throughout the literature, the same notation is used for both a random variable and its corresponding vertex, and the referent will be clear from context.

Formally, a BN $B$ for $n$ random variables ${\mathbf X = (X_i)_{i=1}^n}$ is a pair $(B_S, B_P)$, where $B_S$ is a DAG representing the structure of the network and $B_P$ is the set of conditional probabilities $\{p(X_i | \Pi_i(B_S))|1\leq i \leq n\}$ that give the probability distribution for the state of a variable $X_i$ conditioned on the joint state of its parent set $\Pi_i(B_S)$ (those variables for which there are arcs in the structure $B_S$ from the corresponding vertices to that corresponding to $X_i$; we will write simply $\Pi_i$ where the structure is clear from context). Let $r_i$ denote the number of states of the variable $X_i$ and $q_i=\prod_{j\in \Pi_i}r_j$ denote the number of joint states of the parent set $\Pi_i$ of $X_i$ (in $B_S$). Lowercase variables indicate realizations of the corresponding random variable; $x_{ik}$ indicates the $k$-th state of variable $X_i$ and $\pi_{ij}$ indicates the $j$-th joint state of the parent set $\Pi_i$.
The set of conditional probabilities $B_P$ consists of $n$ probability distributions $\left((\mathbf \theta_{ij})_{j=1}^{q_i}\right)_{i=1}^n$, where $\mathbf \theta_{ij} =  (\theta_{ijk})_{k=1}^{r_i}$ is the conditional probability distribution for the states $(x_{ik})_{k=1}^{r_i}$  of the variable $X_i$ given the joint state $\pi_{ij}$ of its parents $\Pi_i$ (i.e. $p(x_{ik}|\pi_{ij}) = \theta_{ijk}$).

Given a database $D = \{\mathbf x_i | 1\leq i \leq N\}$ consisting of $N$ cases, where each $\mathbf x_i$ denotes the state of all variables $\mathbf X$, the goal is to find the structure that maximizes the posterior distribution $p(B_S|D)$ out of all possible structures. By Bayes's Theorem,
\begin{equation}
p(B_S|D) = \frac{p(D|B_S)p(B_S)}{p(D)}.
\end{equation}
The marginal probability of the database $p(D)$ is the same for all structures, so assuming that each structure is equally likely, this simplifies to
\begin{equation}
p(B_S|D) \propto p(D|B_S).
\end{equation}
In Section \ref{sec:prior-info}, we describe how to account for certain types of non-uniform prior distributions over the graph structures.
With certain further reasonable assumptions, namely multinomial sampling, parameter independence and modularity, and Dirichlet priors, the latter conditional probability is
\begin{equation}
\label{eq:likelihood}
p(D|B_S) = \prod_{i=1}^n \prod_{j=1}^{q_i} \frac{\Gamma(\alpha_{ij})}{\Gamma(N_{ij} + \alpha_{ij})} \prod_{k=1}^{r_i} \frac{\Gamma(N_{ijk} + \alpha_{ijk})}{\Gamma(\alpha_{ijk})},
\end{equation}
where $N_{ijk}$ is the number of cases in $D$ such that variable $X_i$ is in its $k$-th state and its parent set $\Pi_i$ is in its $j$-th state, $N_{ij}=\sum_{k=1}^{r_i}N_{ijk}$, $\alpha_{ijk}$ is the hyperparameter for $\theta_{ijk}$ in the Dirichlet distribution from which $\mathbf{\theta}_{ij}$ is assumed to be drawn, and $\alpha_{ij} = \sum_{k=1}^{r_i}\alpha_{ijk}$ \cite{heckerman1995}.

Given a database $D$, our goal is equivalent to that of finding the structure with the largest likelihood, i.e. the structure that yields the largest probability of the given database conditioned on that structure. We do this by encoding all structures into a set of bits and defining a quadratic pseudo-Boolean function on those bits and additional ancillary bits whose minimizing bitstring encodes the structure with the largest posterior probability.

\subsection{Quantum Annealing}
\label{sec:qa}
Quantum annealing is a method for finding the minimum value of a given objective function. It is the quantum analogue of classical simulated annealing, where the computation is driven by quantum, rather than thermal, fluctuations \cite{kadowaki1998}. A similar procedure is called adiabatic quantum computation, in which the adiabatic interpolation of a Hamiltonian whose ground state is easily prepared to one whose ground state encodes the solution to the desired optimization problem guarantees that final state is indeed the ground state of the latter \cite{farhi2000}. The formalism for both is similar, and the methods described here are useful for both. Specifically, the time-dependent Hamiltonian is 
\begin{equation}
H(t) = A(t) H_0 + B(t) H_1,
\end{equation}
for $0\leq t \leq T$,
where $H_0$ is the initial Hamiltonian, $H_1$ is the final Hamiltonian, $A(t)$ is a real monotonic function such that $A(0)=1$ and $A(T)=0$, and $B(t)$ is a real monotonic function such that $B(0)=0$ and $B(T)=1$. The adiabatic theorem states that if the system starts in the ground state of $H_0$ and $H(t)$ varies slowly enough, then the system will be in the ground state of $H_1$ at time $T$. Using this procedure to solve an optimization problem entails the construction of $H_1$ such that its ground state encodes the optimal solution. In practice, arbitrary Hamiltonians are difficult to implement, but this is ameliorated by  results showing the ability to effectively implement arbitrary Hamiltonians using physically-realizable connectivity through various gadgetry with reasonable overhead \cite{oliveira2008, kaminsky2004}. 

The main contribution of this paper is a construction of $H_1$ such that its ground state encodes the solution for a given instance of \textsc{BNSL}. Specifically, we construct an instance of \textsc{QUBO} whose solution is the score-maximizing DAG; there is a simple transformation between a classically defined \textsc{QUBO} instance and a diagonal quantum 2-local Hamiltonian consisting of only Pauli $Z$ and $ZZ$ terms \cite{perdomo2008}.

When the desired Hamiltonian is diagonal and 2-local an embedding technique called graph-minor embedding can be used \cite{choi2008, choi2011}. A graph $G$ is a minor of another graph $H$ if there exists a mapping from vertices of $G$ to disjoint, individually connected subgraphs of $H$ such that for every edge $e$ in $G$ there is an edge in $H$ whose adjacent vertices are mapped to by the adjacent vertices of the edge $e$. The desired Hamiltonian and hardware are considered as graphs, called the logical and physical respectively, where qubits correspond to vertices and edges correspond to a 2-body interaction, desired or available. Graph-minor embedding consists of two parts: finding a mapping of the logical vertices to sets of physical as described, and setting the parameters of the physical Hamiltonian such that the logical fields are distributed among the appropriate physical qubits and there is a strong ferromagnetic coupling between physical qubits mapped to my the same logical qubit so that they act as one. Determining the graph-minor mapping, or even if the logical graph is a minor of the physical one, is itself \textsc{NP}-hard, and so in practice heuristics are used \cite{cai2014}.

\section{Mapping BNSL to QUBO}
\label{sec:mapping}
We use $n(n-1)$ bits $\mathbf d = (d_{ij})_{\substack{1\leq i < j \leq n\\i\neq j}}$
to encode each of the possible arcs in a directed graph, where $d_{ij}=1$ indicates the presence of the arc from vertex $X_i$ to vertex $X_j$ and $d_{ij}=0$ indicates its absence. In this way, the matrix whose entries are $\{d_{ij}\}$ is the adjacency matrix of a directed graph (where $d_{ii}=0$). Let $G(\mathbf d)$ be that directed graph encoded in some $\mathbf d$. The mapping consists of the construction of a function of these ``arc bits'' that is equal to the logarithm of the score of the structure they encode, as well as a function that penalizes states that encode graphs with directed cycles. Additionally, due to resource constraints, we add a function that penalizes structures in which any node has more than $m$ parents and allow that the scoring function only works on states that encode structures in which each vertex has at most $m$ parents.

\subsection{Score Hamiltonian}
For numerical efficiency, it is the logarithm of the likelihood for a given structure that is actually computed in practice. The likelihood given in Equation~\ref{eq:likelihood} decomposes into a product of likelihoods for each variable, which we exploit here. Let
\begin{align}
\lefteqn{s_i(\Pi_i(B_S))} \nonumber \\
& \equiv -\log \left( 
\prod_{j=1}^{q_i}\frac{\Gamma(\alpha_{ij})}{\Gamma(N_{ij} + \alpha_{ij})} 
\prod_{k=1}^{r_i} \frac{\Gamma(\alpha_{ijk} + N_{ijk})}{\Gamma(\alpha_{ijk})}
\right),
\end{align}
i.e. the negation of the ``local'' score function, and
\begin{equation}
s(B_S) \equiv \sum_{i=1}^n s_i(\Pi_i(B_S)),
\end{equation}
so that
\begin{equation}
\log p(D|B_S) = -s(B_S) = -\sum_{i=1}^n s_i(\Pi_i(B_S)).
\end{equation}
The negation is included because while we wish to maximize the likelihood, in \textsc{QUBO} the objective function is minimized.
We wish to define a quadratic pseudo-Boolean function $H_{\text{score}}(\mathbf d)$ such that $H_{\text{score}}(\mathbf d) = s(G(\mathbf d))$. Let $\mathbf d_i\equiv (d_{ji})_{\substack{1\leq j \leq n\\j \neq i}}$ and define
\begin{equation}
H_{\text{score}}(\mathbf d)\equiv \sum_{i=1}^n H_{\text{score}}(\mathbf d_i).  
\end{equation}
Any pseudo-Boolean such as $H_{\text{score}}^{(i)}$ has a unique multinomial form and $s_i(\Pi_i(G(\mathbf d)))$ depends only on arcs whose head is $X_i$ (i.e. those encoded in $\mathbf d_i$), so we write without loss of generality
\begin{equation}
\label{eq:genHscore}
H_{\text{score}}^{(i)} (\mathbf d_i)
= \sum_{J \subset \{1,\cdots, n\} \setminus \{i\}} \left( w_i(J) \prod_{j \in J} d_{ji}\right).
\end{equation}
From this it is clear that $w_i(\emptyset) = s_i(\emptyset)$. If $X_i$ has a single parent $X_j$, then the above simplifies to
\begin{equation}
H_{\text{score}}^{(i)} = w_i(\emptyset) + w_i(\{j\}) = s_i(\{X_j\}),
\end{equation}
which yields $w_i(\{j\}) = s_i(\{X_j\}) - s_i(\emptyset)$ for arbitrary $j$. Similarly, if $X_i$ has two parents $X_j$ and $X_k$, then
\begin{equation} \begin{array}{rl}
H_{\text{score}}^{(i)} &= w_i(\emptyset) + w_i(\{j\}) + w_i(\{k\}) + w_i(\{j, k\}) \\
&= s_i(\emptyset) + (s_i(\{X_j\}) - s_i(\emptyset)) \\
&\quad+ (s_i(\{X_j\}) - s_i(\emptyset)) + w_i(\{j, k\}) \\
&= s_i(\{X_j\}) + s_i(\{X_k\}) - s_i(\emptyset) + w_i(\{j, k\}) \\
&= s_i(\{X_j, X_k\}), 
\end{array} \end{equation}
which yields $w_i(\{j, k\}) = s_i(\{X_j, X_k\}) - s_i(\{X_j\}) - s_i(\{X_k\}) + s_i(\emptyset)$. Extrapolating this pattern, we find that 
\begin{equation}
w_i(J) = \sum_{l=0}^{|J|} (-1)^{|J|-l}\sum_{\substack{K\subset J \\ |K| = l}} s_i(K).
\end{equation}
Note that the general form given in Equation~\ref{eq:genHscore} includes terms of order $(n-1)$. Ultimately, we require a quadratic function and reducing high-order terms to quadratic requires many extra variables. Therefore, we limit the number of parents that each variable has to $m$ via $H_{\text{max}}$, described below, and allow that the score Hamiltonian actually gives the score only for structures with maximum in-degree $m$:
\begin{equation}
H_{\text{score}}^{(i)} (\mathbf d_i)
= \sum_{\substack{J \subset \{1,\cdots, n\} \setminus \{i\} \\ |J|\leq m}} \left( w_i(J) \prod_{j \in J} d_{ji}\right),
\end{equation}
which is equal to $s_i(\Pi_i(G(\mathbf d)))$ if $|\mathbf d_i|\leq m$.

\subsection{Max Hamiltonian}
Now we define a function $H_{\text{max}}^{(i)}$ whose value is zero if variable $X_i$ has at most $m$ parents and positive otherwise. This is done via a slack variable $y_i$ for each node. Define
\begin{align}
\label{eq:H_max-def}
d_i & \equiv |\mathbf d_i| = \sum_{\substack{1\leq j \leq n \\ j \neq i}} d_{ji},
\intertext{i.e. $d_i$ is the in-degree of $x_i$,}
\mu & \equiv \left\lceil \log_2(m+1)\right\rceil,\\
\intertext{i.e. $\mu$ is the number of bits needed to represent an integer in $[0,m]$,}
y_i & \equiv \sum_{l=1}^{\mu}2^{l-1}y_{il},\\
\intertext{i.e. $y_i\in\mathbb Z$ is encoded using the $\mu$ bits ${\mathbf y_i = (y_{il})_{l=1}^{\mu} \in \mathbb B^{\mu}}$, and}
H_{\text{max}}^{(i)}(\mathbf d_i, \mathbf y_i) & = \delta_{\text{max}}^{(i)}(m - d_i - y_i)^2,
\end{align}
where $\delta_{\text{max}}^{(i)}>0$ is the weight of the penalty. For convenience, we also write $H_{\text{max}}^{(i)}(d_i, y_i)$ without loss of generality. When viewed as a quadratic polynomial of $y_i$, $H_{\text{max}}^{(i)}$ takes its minimal value of zero when $y_i=m-d_i$. Note that $0\leq y_i\leq 2^{\mu} -1$. 
If $d_i \leq m$, let $y_i^*$ be such that $0\leq y_i^*=m-d_i \leq m \leq 2^{\mu}-1$. Then $H_{\text{max}}(d_i, y_i^*) = 0$. However, when $d_i > m$, because $y_i \geq 0$, we cannot set $y_i$ in that way. 
By taking the derivative with respect to $y_i$,
\begin{equation}
\frac{\partial }{\partial y_i}H_{\text{max}}^{(i)}(d_i, y_i)
=2\delta_{\text{max}}^{(i)}(y_i - m + d_i)>0,
\end{equation}
we see that $H_{\text{max}}^{(i)}$ takes its minimum value over the domain of $y_i$ when $y_i=0$, and that value is
\begin{equation}
H_{\text{max}}^{(i)}(d_i, 0) = \delta_{\text{max}}^{(i)}(m-d_i)^2.
\end{equation}
Noting that $H_{\text{max}}^{(i)}$ is nonnegative,
\begin{equation}
\min_{y_i}H_{\text{max}}^{(i)}(d_i, y_i) = \begin{cases}
0, & d_i \leq m,\\
\delta_{\text{max}}(d_i - m)^2, & d_i > m.
\end{cases}
\end{equation}
Thus, if the constraint $|\mathbf d_i|\leq m$ is satisfied, $H_{\text{max}}^{(i)}$ does nothing, but if $|\mathbf d_i| > m$, a penalty of at least $\delta_{\text{max}}^{(i)}$ is added.

\subsection{Acyclicity}
Lastly, we must ensure that the structure encoded in $\{d_{ij}\}$ has no directed cycles. We do so by introducing additional Boolean variables $\mathbf r=(r_{ij})_{1\leq i < j \leq n}$ that will encode a binary relation on the set of variables. Every directed acyclic graph admits at least one topological order of the vertices, and a graph with a directed cycle admits none. A topological order ``$\leq$'' of the vertices $\{X_i\}$ of a digraph is a total order thereon such that for every edge $(i, j)$ in the digraph $X_i \leq X_j$. Such an order is not unique in general. Let $r_{ij}=1$ represent $x_i \leq x_j$ and $r_{ij}=0$ represent $x_i \geq x_j$.

To ensure acyclicity, we define a function $H_{\text{trans}}(\mathbf r)$ such that $H_{\text{trans}}(\mathbf r)$ is zero if the relation encoded in $\{r_{ij}\}$ is transitive and is positive otherwise, as well as a function $H_{\text{consist}}$ such that $H_{\text{consist}}(\mathbf d$ is zero if the order encoded in $\{r_{ij}\}$ is consistent with the directed edge structure encoded by $\{d_{ij}\}$ and positive otherwise.
First, we ensure that $\{r_{ij}\}$ is transitive. Because if a tournament has any cycle, it has a cycle of length three, it is sufficient to penalize directed 3-cycles. Define
\begin{equation}
H_{\text{trans}}(\mathbf r) \equiv \sum_{1\leq i < j \leq n}H_{\text{trans}}^{(ijk)}(r_{ij}, r_{ik}, r_{jk}),
\end{equation}
where
\begin{align}
\lefteqn{H_{\text{trans}}^{(ijk)}(r_{ij}, r_{ik}, r_{jk}) } \nonumber\\
&\equiv \delta_{\text{trans}}^{(ijk)}\left[r_{ij}r_{jk}(1-r_{ik}) 
+ (1-r_{ij})(1-r_{jk})r_{ik}\right] \nonumber\\
&= \delta_{\text{trans}}^{(ijk)}\left(r_{ik}+r_{ij}r_{jk} - r_{ij}r_{ik} - r_{jk}r_{ik} \right)  \nonumber\\
&=\begin{cases} \delta_{\text{trans}}^{(ijk)}, &
\left[(x_i \leq x_j \leq x_k \leq x_i) \right.\\
& \quad \left.\lor (x_i \geq x_j \geq x_k \geq x_i)\right],\\
0, & \text{otherwise}, \end{cases}
\end{align}
and $\delta_{\text{trans}}^{(ijk)}$ is the penalty weight added if $\mathbf r$ encodes either 3-cycle containing $\{x_i, x_j, x_k\}$. Note that the superscripted indices on the penalty weight variable are unordered so that $\delta_{\text{trans}}^{(i'j'k')}\equiv \delta_{\text{trans}}^{(ijk)}$ for all permutations $(i',j',k')$ of $(i,j,k)$. 

Second, we must penalize any state that represents an order and a directed graph that are inconsistent with each other, i.e. in which $r_{ij}=1$ and $(x_j, x_i)\in E(G(\mathbf d))$ or $r_{ij}=0$ and ($(x_i, x_j) \in E(G(\mathbf d))$. Equivalently, we want to ensure that neither $r_{ij}=d_{ji}=1$ nor $r_{ij}=1-d_{ij}=0$. Define
\begin{equation}
H_{\text{consist}}(\mathbf d, \mathbf r)\equiv 
\sum_{1\leq i < j \leq n} H_{\text{consist}}^{(ij)}(d_{ij}, d_{ji}, r_{ij})
\end{equation}
and
\begin{align}
\lefteqn{H_{\text{consist}}(d_{ij}, d_{ji}, r_{ij}) } \nonumber\\
&= \delta_{\text{consist}}^{(ij)}(d_{ji}r_{ij} + d_{ij}(1-r_{ij})) \nonumber \\
&= \begin{cases} 
\delta_{\text{consist}}^{(ij)}, &
d_{ji}=r_{ij}=1 \lor (d_{ij}=1 \land r_{ij}=0),\\
0, & \text{otherwise},
\end{cases}\end{align}
which has the desired features. Again the superscripted indices on the penalty weight variable are unordered, so that $\delta_{\text{consist}}^{(ji)}\equiv \delta_{\text{consist}}^{(ij)}$ for $1\leq i < j \leq n$.
Finally, define
\begin{equation}
H_{\text{cycle}}(\mathbf d, \mathbf r)\equiv 
H_{\text{consist}}(\mathbf d, \mathbf r) + H_{\text{trans}}(\mathbf r),
\end{equation}
which takes on its minimal value of zero if $G(\mathbf d)$ is a DAG and is strictly positive otherwise.

\subsection{Total Hamiltonian}
\begin{figure*}
	\includegraphics[width=\textwidth,trim=1in 1in 1in 3in]{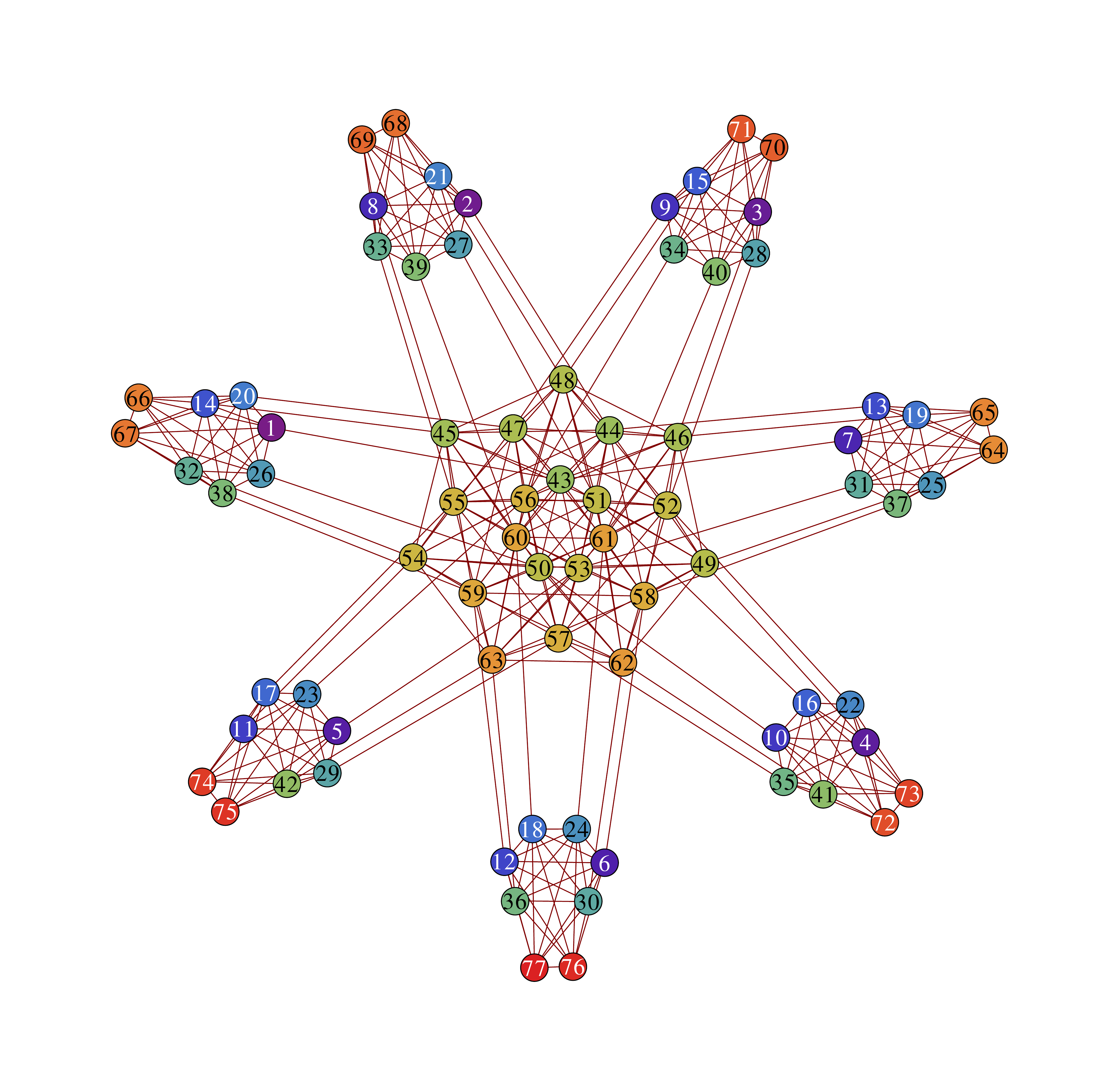}
	\includegraphics[width=\textwidth]{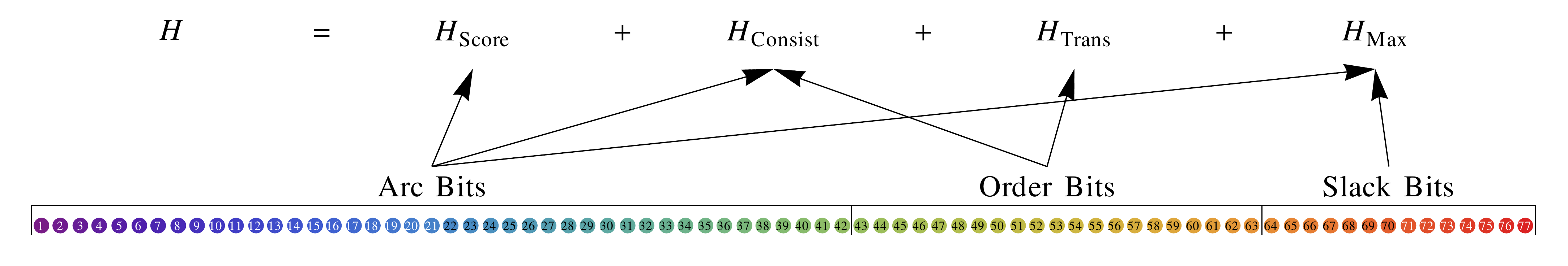}
	\caption{
Top: Logical Graph for $n=7$ BN variables with a maximum of $m=2$ parents. Each vertex corresponds to a bit in the original QUBO and an edge between two vertices indicates a non-zero quadratic term containing the corresponding bits. The central cluster is the order bits used to enforce acyclicity; it is highly connected but not complete. Each "spike" corresponds to a variable $X_i$ in the Bayesian network. The outer two vertices are the corresponding slack bits $\{y_{il}\}$ and the remaining inner vertices are the arc bits $\{d_{ji}\}$ representing those arcs for which the corresponding Bayesian network variable is the head. Each spike is a clique, due to $H_{\text{max}}$ (independent of which the arc bits for a given BN variable are fully connected due to $H_{\text{score}}$). Each arc bit is connected to a single order bit and each order bit is connected to two arc bits, due to $H_{\text{consist}}$. 
Bottom: Schematic of the Hamiltonian. The row of disks represents all of the bits in the original QUBO problem, colored consistently with the logical graph above. They are grouped into three sets: the arc bits representing the presence of the possible arcs, the order bits representing a total ordering by which we enforce acyclicity, and the slack bits used to limit the size of the parent sets. An arrow from a group of bits to a part of the Hamiltonian indicates that that part of the Hamiltonian is a function of that set of bits.
}
\end{figure*}
Putting together the parts of the Hamiltonian defined above, define
\begin{equation}
H(\mathbf d, \mathbf y, \mathbf r) \equiv  
H_{\text{score}}(\mathbf d) 
+ H_{\text{max}}(\mathbf d, \mathbf y) 
+ H_{\text{cycle}}(\mathbf d, \mathbf r).
\end{equation}
In the next section, we show lower bounds on the penalty weights therein that ensure that the ground state of the total Hamiltonian $H$ encodes the highest-scoring DAG with a maximum parent set size of $m$.
The sets of variables described above have the following sizes:
\begin{align}
|\{d_{ij}\}| & = n(n-1), \nonumber \\
|\{r_{ij}\}| & = \frac{n(n-1)}{2}, \text{and} \nonumber\\
|\{y_{il}\}| & = n \mu = n \left\lceil \log_2(m+1)\right\rceil.
\end{align}
Furthermore, while $H_{\text{max}}$ and $H_{\text{cycle}}$ are natively 2-local, $H_{\text{score}}$ is $m$-local. For each variable $x_i$ there are $\binom{n-1}{l}$ possible parent sets of size $l$ and the same number of corresponding $l$-local terms in $H_{\text{score}}$. If $m=3$, the full set of $\binom{n-1}{3}$ high-local terms $\left\{\prod_{j\in J}d_{ji} ||J|=3\right\}$ corresponding to parent sets of the variable $x_i$ can be reduced using $\lfloor \frac{(n-2)^2}{4}\rfloor$ ancilla variables. In total, $n\lfloor \frac{(n-2)^2}{4}\rfloor$ ancilla variables are needed to reduce $H_{\text{score}}$ to 2-local.

A quadratic pseudo-Boolean function can be identified with a graph whose vertices correspond to its arguments and whose edges correspond to non-zero quadratic terms. The graph identified with the Hamiltonian described above for $m=2$ has several features that indicate it may be difficult to embed in sparsely connected physical devices. First, for each variable $X_i$ there is a clique consisting of the variables $\{d_{ji}\} \cup \{y_{il}\}$, whose order is $(n-1) + \mu$. Second, the set of variables $\{r_{ij}\}$ are almost fully connected. 

\subsection{Utilizing Prior Information}
\label{sec:prior-info}
The mapping so far described assumes a uniform prior distribution over all possible DAGs of the appropriate size and with the given maximum number of parents. However, there are situations in which it may be desirable to fix the presence or absence of an arc in the search space. This could be because of domain knowledge or because hardware limitations prevent the implementation of the mapping for all arcs, in which case resort can be made to iterative search procedures such as the bootstrap method \cite{friedman1999}. To realize the reduction in qubits needed by accounting for such a reduced search space, suppose that we wish to only consider network structures that include the arc $(i, j)$, where $i < j$ without loss of generality. We then set $d_{ij}=1$, $d_{ji}=0$, and $r_{ij}=1$. Similarly, if $(i, j)$ is to be excluded, we set $d_{ij}=d_{ji}=0$ and keep $r_{ij}$ as a free variable. This can be done for any number of arcs. The Hamiltonian remains unchanged once these substitutions are made, and the lower bounds on the penalty weight remain sufficient, with the exception of the terms used in quadratization in the case $m>2$, in which case the quadratization should be done after substitution to utilize the reduction in degree of some terms.

\section{Penalty Weights}
\label{sec:penalty-weights}
In the expression above, there are several sets of free parameters called penalty weights: $\{\delta_{\text{max}}^{i}|1\leq i \leq n\}$, $\{\delta_{\text{consist}}^{ij}|1\leq i, j \leq n, i \neq j\}$, and $\{\delta_{\text{trans}}^{ijk}|1\leq i < j < k\}$. They are associated with penalty terms, i.e. parts of the Hamiltonian whose value is zero on states satisfying the corresponding constraint and is positive on states violating it. The purpose of their inclusion is to ensure that the energy-minimizing state of the total Hamiltonian satisfies the requisite constraints by increasing the energy of those that do not. More strongly, the penalty weights must be set such that the ground state of the total Hamiltonian is the lowest energy state of $H_{\text{score}}$ that satisfies the constraints. Here we provide sufficient lower bounds on the penalty weight necessary to ensure that this purpose is met. No claim is made to their necessity, and tighter lower bounds may exist.
    It is important to note that these bounds are mathematical, i.e. they ensure their purpose is met as stated above. In pure adiabatic quantum computation, in which the quantum system is in its ground state for the duration of the algorithm, this is sufficient (though the computation time necessary for the conditions of the adiabatic theorem to hold may be longer than otherwise if a penalized state has lower energy than the first excited unpenalized state). In practical quantum annealing, however, a combination of physical effects may cause the optimal value (in the sense of minimizing the energy of the lowest-energy state found, which may or may not be the global ground state) of the penalty weights to be less than these bounds. This remains the case even for bounds shown to be tight with respect to their mathematical properties.

The bound presented for each of the three sets of penalty weights is based on the notion that only the addition of an arc (i.e. changing some $d_{ij}$ from 0 to 1) can lead to the violation of two of the constrains we are concerned with: the maximum number of parents and the consistency of the arc bits and the order bits. Therefore, we can use a basis for how strongly the associated penalty needs to be the greatest difference in the energy of $H_{\text{score}}$ adding each arc can contribute. The penalty for the third constraint, the absence of directed 3-cycles among the order bits, will then be a function of the penalty for the consistency constraint.

Formally, we wish to set the penalties such that for any $\mathbf{d}$ violating at least one of the constraints, we have
\begin{equation}
\label{eq:constraint-inequality}
\min_{\mathbf{y}, \mathbf{r}} H(\mathbf{d}, \mathbf{y}, \mathbf{r}) > H_{\text{score}}(\mathbf{d}^*),
\end{equation}
where
\begin{equation}
\mathbf d^* \equiv \operatorname*{arg\,min}_{\substack{|\mathbf{d}'|\leq m \\ G(\mathbf{d}')\text{ is a DAG}}} H_{\text{score}}(\mathbf{d}').
\end{equation}
This is achieved by showing that for any such $\mathbf{d}$ violating at least one constraint, there is another $\mathbf{d}'$ that satisfies all the constraints such that
\begin{equation}
\min_{\mathbf{y}, \mathbf{r}}H(\mathbf{d}, \mathbf{y}, \mathbf{r}) \geq 
\min_{\mathbf{y}, \mathbf{r}}H(\mathbf{d}', \mathbf{y}, \mathbf{r}).
\end{equation}
Because $\mathbf{d}'$ satisfies all the constraints,
\begin{equation}
\min_{\mathbf{y}, \mathbf{r}}H(\mathbf{d}', \mathbf{y}, \mathbf{r}) =
H_{\text{score}}(\mathbf{d}'),
\end{equation}
which implies the inequality in \eqref{eq:constraint-inequality}.
In this section, we state the bounds and provide brief justification, but relegate the proofs to Appendix \ref{app:pen-bound-proofs}.
\subsection{Auxiliary Quantity}
In this section, we briefly define an auxiliary quantity,
\begin{restatable}{equation}{Deltadef}
\Delta_{ji}' = -\min_{\{d_{ki}|k\neq i,j\}} \left\lbrace\left.H_{\text{score}}^{(i)}\right|_{d_{ji}=1} - \left.H_{\text{score}}^{(i)}\right|_{d_{ji}=0}\right\rbrace,
\end{restatable}
that will allow us to define the maximum penalty weights associated with the bounds described previously.
For details of the calculation of this quantity, see Appendix \ref{app:delta-calc}.
In general it is possible that $\Delta_{ji}<0$ as defined above. We thus define the quantity 
\begin{equation}
\Delta_{ji} \equiv \max\{0, \Delta_{ji}'\}.
\end{equation}
In the proof of the bounds, the two following facts will be useful.

\begin{restatable}[Monotonicity of $H_{\textnormal{max}}$]{claim}{monotonicityhmax}
\label{claim:Hmax-monotonicity}

If $\mathbf{d}\geq \mathbf{d}'$, then $\min_{\mathbf{y}}H_{\textnormal{max}}(\mathbf{d}, \mathbf{y}) \geq \min_{\mathbf{y}}H_{\textnormal{max}}(\mathbf{d}', \mathbf{y})$.
\end{restatable}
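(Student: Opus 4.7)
The plan is to reduce the statement to the per-vertex analysis already carried out in Section 3.2. Recall that $H_{\text{max}}(\mathbf{d}, \mathbf{y}) = \sum_{i=1}^n H_{\text{max}}^{(i)}(\mathbf{d}_i, \mathbf{y}_i)$ and that the slack subvectors $\mathbf{y}_i$ are pairwise disjoint across $i$. The first step I would take is to observe that this disjointness lets the joint minimization factor,
\begin{equation*}
\min_{\mathbf{y}} H_{\text{max}}(\mathbf{d}, \mathbf{y}) = \sum_{i=1}^n \min_{\mathbf{y}_i} H_{\text{max}}^{(i)}(\mathbf{d}_i, \mathbf{y}_i),
\end{equation*}
and similarly for $\mathbf{d}'$. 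Thus it suffices to prove the monotonicity inequality termwise.

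Next I would invoke the closed form derived in Section 3.2, namely that $\min_{\mathbf{y}_i} H_{\text{max}}^{(i)}(\mathbf{d}_i, \mathbf{y}_i) = \delta_{\text{max}}^{(i)} \max\{0, d_i - m\}^2$, where $d_i = \sum_{j \neq i} d_{ji}$ is the in-degree of $X_i$. The hypothesis $\mathbf{d} \geq \mathbf{d}'$ is componentwise on Boolean arc bits, so summing over $j \neq i$ yields $d_i \geq d_i'$ for every $i$. The real function $f(t) = \max\{0, t - m\}^2$ is non-decreasing for $t \geq 0$ (it is identically $0$ on $[0, m]$ and strictly increasing on $[m, \infty)$), and $\delta_{\text{max}}^{(i)} > 0$, so
\begin{equation*}
\min_{\mathbf{y}_i} H_{\text{max}}^{(i)}(\mathbf{d}_i, \mathbf{y}_i) = \delta_{\text{max}}^{(i)} f(d_i) \geq \delta_{\text{max}}^{(i)} f(d_i') = \min_{\mathbf{y}_i} H_{\text{max}}^{(i)}(\mathbf{d}'_i, \mathbf{y}_i).
\end{equation*}
Summing over $i$ and using the factorization from the first step yields the claim.

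There is no substantive obstacle: the statement is essentially a repackaging of the closed-form minimum already computed for $H_{\text{max}}^{(i)}$ together with the separability of the minimization across the $n$ independent blocks of slack bits. The only item that warrants a sentence of care is verifying that the argmin in each block is unaffected by the other blocks, which is immediate from disjointness of the $\mathbf{y}_i$.
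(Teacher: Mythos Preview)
Your proposal is correct and follows essentially the same route as the paper: factor the minimization over the disjoint slack blocks, plug in the closed-form per-vertex minimum from Section~3.2, use that $\mathbf d\ge\mathbf d'$ forces $d_i\ge d_i'$, and invoke monotonicity of that closed form. The only cosmetic difference is that the paper's proof writes the per-vertex minimum as $\delta_{\text{max}}^{(i)}\max\{0,|\mathbf d_i|-m\}$ rather than the squared form $\delta_{\text{max}}^{(i)}\max\{0,d_i-m\}^2$ you (correctly, per Section~3.2) use; either expression is non-decreasing in $d_i$, so the argument goes through identically.
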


\begin{restatable}[Monotonicity of $H_{\textnormal{cycle}}$]{claim}{monotonicityhcycle}
\label{Hcycle-monotonicity}

If $\mathbf{d}\geq \mathbf{d}'$, then 
$\min_{\mathbf{r}}H_{\textnormal{cycle}}(\mathbf{d}, \mathbf{r}) \geq 
\min_{\mathbf{r}}H_{\textnormal{cycle}}(\mathbf{d}', \mathbf{r})$.
\end{restatable}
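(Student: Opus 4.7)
The plan is to reduce the claim to a pointwise-in-$\mathbf{r}$ monotonicity statement, after which the min-preservation step is immediate. First I would split $H_{\text{cycle}}(\mathbf{d}, \mathbf{r}) = H_{\text{consist}}(\mathbf{d}, \mathbf{r}) + H_{\text{trans}}(\mathbf{r})$ and observe that $H_{\text{trans}}$ depends only on $\mathbf{r}$, so it can be ignored for the purpose of comparing the $\mathbf{d}$-dependence.

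Next, I would show that for every fixed $\mathbf{r}$, the map $\mathbf{d}\mapsto H_{\text{consist}}(\mathbf{d}, \mathbf{r})$ is componentwise nondecreasing. From the defining expression
\begin{equation}
H_{\text{consist}}(\mathbf{d}, \mathbf{r}) = \sum_{1\leq i<j\leq n} \delta_{\text{consist}}^{(ij)}\bigl(d_{ji}r_{ij} + d_{ij}(1-r_{ij})\bigr),
\end{equation}
each variable $d_{ij}$ appears linearly with coefficient $\delta_{\text{consist}}^{(ij)} r_{ij} \geq 0$ or $\delta_{\text{consist}}^{(ij)}(1-r_{ij}) \geq 0$. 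Hence if $\mathbf{d}\geq \mathbf{d}'$ coordinatewise, flipping any $d'_{ij}$ from $0$ to $1$ can only add a nonnegative amount, giving $H_{\text{consist}}(\mathbf{d}, \mathbf{r}) \geq H_{\text{consist}}(\mathbf{d}', \mathbf{r})$. Adding the common term $H_{\text{trans}}(\mathbf{r})$ preserves this inequality, so $H_{\text{cycle}}(\mathbf{d}, \mathbf{r}) \geq H_{\text{cycle}}(\mathbf{d}', \mathbf{r})$ for every $\mathbf{r}$.

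Finally, I would apply the standard monotonicity-of-minimum argument: letting $\mathbf{r}^\star$ achieve $\min_{\mathbf{r}} H_{\text{cycle}}(\mathbf{d}, \mathbf{r})$,
\begin{equation}
\min_{\mathbf{r}} H_{\text{cycle}}(\mathbf{d}, \mathbf{r}) = H_{\text{cycle}}(\mathbf{d}, \mathbf{r}^\star) \geq H_{\text{cycle}}(\mathbf{d}', \mathbf{r}^\star) \geq \min_{\mathbf{r}} H_{\text{cycle}}(\mathbf{d}', \mathbf{r}),
\end{equation}
which is the desired conclusion.

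There is no real obstacle here; the claim reduces to the elementary observation that $H_{\text{consist}}$ is a nonnegative linear combination of the $d_{ij}$'s (for any fixed $\mathbf{r}$) and that $H_{\text{trans}}$ is $\mathbf{d}$-independent. The only thing to double-check is that $\delta_{\text{consist}}^{(ij)} \geq 0$, which is explicit from the definition of the penalty weights.
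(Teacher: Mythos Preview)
Your proof is correct and matches the paper's own argument essentially line for line: both fix the minimizer $\mathbf{r}^\star$ of $H_{\text{cycle}}(\mathbf{d},\cdot)$, use the explicit form of $H_{\text{consist}}$ to get the pointwise inequality $H_{\text{consist}}(\mathbf{d},\mathbf{r}^\star)\geq H_{\text{consist}}(\mathbf{d}',\mathbf{r}^\star)$ from nonnegativity of the coefficients $\delta_{\text{consist}}^{(ij)}r_{ij}^\star$ and $\delta_{\text{consist}}^{(ij)}(1-r_{ij}^\star)$, add back the common $H_{\text{trans}}(\mathbf{r}^\star)$, and then bound below by the minimum over $\mathbf{r}$. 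The paper also explicitly notes the implicit assumption $\delta_{\text{consist}}^{(ij)}>0$, which is exactly the check you flagged.
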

These say simply that the removal of one or more arcs from $G(\mathbf d)$ cannot increase the values of $H_{\text{max}}$ nor $H_{\text{cycle}}$.

\subsection{``Maximum'' Penalty Weights}
\label{sec:max-penalty}
Here we show a lower bound for $\{\delta_{\text{max}}^{(i)}\}$ that guarantees that if $\mathbf{d}$ is such that $\max_{1\leq i \leq n}|\mathbf{d}_i|>m$ there exists a $\mathbf{d}'$ with lesser total energy such that $\max_{1\leq i\leq n}|\mathbf{d}_i|\leq m$. To do so, we show that if, for some $\mathbf{d}$ and $i$, $|\mathbf{d}_i|>m$, there is a $\mathbf{d}'$ such that $|\mathbf{d}_i'|=|\mathbf{d}_i|-1$ and 

\begin{equation}
\min_{\mathbf{y}, \mathbf{r}}H(\mathbf{d}, \mathbf{y}, \mathbf{r}) \geq 
\min_{\mathbf{y}, \mathbf{r}}H(\mathbf{d}', \mathbf{y}, \mathbf{r}).
\end{equation}
This idea can be applied iteratively to show that if, for some $\mathbf{d}$ and $i$, $|\mathbf{d}_i|>m$, there is some $\mathbf{d}'$ with lesser energy such that $\mathbf{d}_j'=\mathbf{d}_j$ for $j\neq i$ and $|\mathbf{d}_i'|\leq m$. This idea in turn can be applied iteratively to show that if for some $\mathbf{d}$ $\max_{1\leq i\leq n}|\mathbf{d}_i|>m$ there is a $\mathbf{d}'$ such that $\max_{1\leq i \leq n}|\mathbf{d}_i'|\leq m$.

\begin{restatable}{claim}{maxpenaltyone}
\label{claim:max-penalty-bound-1}

If $\delta_{\text{max}}^{(i)} > \max_{j\neq i} \Delta_{ji}$ for all $1\leq i \leq n$, then for all $\mathbf{d}$ such that, for some $i^*$, $|\mathbf{d}_{i^*}|>m$, there is a $\mathbf{d}'$ such that $|\mathbf{d}_{i^*}'|=|\mathbf{d}_{i^*}|-1$, $\mathbf d_i' = \mathbf d_i$ for all $i\neq i^*$, and $\min_{\mathbf{y}, \mathbf{r}}H(\mathbf{d}, \mathbf{y}, \mathbf{r}) > \min_{\mathbf{y}, \mathbf{r}}H(\mathbf{d}', \mathbf{y}, \mathbf{r})$.
\end{restatable}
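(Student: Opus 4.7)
My plan is to construct $\mathbf{d}'$ by removing a single parent of $i^*$. Since $|\mathbf{d}_{i^*}| > m \geq 0$, there exists some $j^* \neq i^*$ with $d_{j^* i^*} = 1$; let $\mathbf{d}'$ be obtained from $\mathbf{d}$ by flipping this one bit to $0$, so trivially $|\mathbf{d}_{i^*}'| = |\mathbf{d}_{i^*}| - 1$ and $\mathbf{d}_i' = \mathbf{d}_i$ for $i \neq i^*$. The key structural observation is that $H_{\text{max}}$ depends only on $(\mathbf{d}, \mathbf{y})$ and $H_{\text{cycle}}$ depends only on $(\mathbf{d}, \mathbf{r})$, so the inner minimization decouples:
\begin{equation*}
\min_{\mathbf{y}, \mathbf{r}} H(\mathbf{d}, \mathbf{y}, \mathbf{r}) = H_{\text{score}}(\mathbf{d}) + \min_{\mathbf{y}} H_{\text{max}}(\mathbf{d}, \mathbf{y}) + \min_{\mathbf{r}} H_{\text{cycle}}(\mathbf{d}, \mathbf{r}),
\end{equation*}
and identically for $\mathbf{d}'$. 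It then suffices to bound the three corresponding differences separately and sum.

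For the score term, $H_{\text{score}} = \sum_i H_{\text{score}}^{(i)}(\mathbf{d}_i)$, so only the $i^*$-summand changes. Applying the definition of $\Delta_{j^* i^*}'$ at the specific setting of the remaining bits $\{d_{k i^*} : k \neq i^*, j^*\}$ inherited from $\mathbf{d}$, we obtain $H_{\text{score}}^{(i^*)}|_{d_{j^* i^*} = 0} - H_{\text{score}}^{(i^*)}|_{d_{j^* i^*} = 1} \leq \Delta_{j^* i^*}' \leq \Delta_{j^* i^*}$, so $H_{\text{score}}(\mathbf{d}') - H_{\text{score}}(\mathbf{d}) \leq \Delta_{j^* i^*}$. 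For the max term, $H_{\text{max}}$ is also a sum over $i$ and only the $i^*$-summand changes; using the closed-form expression $\min_{y_i} H_{\text{max}}^{(i)}(d_i, y_i) = \delta_{\text{max}}^{(i)}(d_i - m)^2$ for $d_i > m$ (and $0$ for $d_i \leq m$), a short case analysis on whether $|\mathbf{d}_{i^*}| - 1 > m$ or $|\mathbf{d}_{i^*}| - 1 = m$ shows the $i^*$-summand drops by at least $\delta_{\text{max}}^{(i^*)}$. The cycle term cannot increase, by the monotonicity claim (Claim \ref{Hcycle-monotonicity}) applied to $\mathbf{d}' \leq \mathbf{d}$.

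Combining, $\min_{\mathbf{y}, \mathbf{r}} H(\mathbf{d}', \mathbf{y}, \mathbf{r}) - \min_{\mathbf{y}, \mathbf{r}} H(\mathbf{d}, \mathbf{y}, \mathbf{r}) \leq \Delta_{j^* i^*} - \delta_{\text{max}}^{(i^*)}$, which is strictly negative by the hypothesis $\delta_{\text{max}}^{(i^*)} > \max_{j \neq i^*} \Delta_{j i^*} \geq \Delta_{j^* i^*}$, yielding the desired strict inequality. I expect the main subtlety to be the boundary case $|\mathbf{d}_{i^*}| = m + 1$, where the savings from $H_{\text{max}}$ drop to the bare minimum $\delta_{\text{max}}^{(i^*)}$; this is precisely the worst case that sets the required strength of the hypothesis, and the strict inequality there leaves no slack to spare. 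A minor but important book-keeping point is checking that the truncation $\Delta_{j^* i^*} = \max\{0, \Delta_{j^* i^*}'\}$ still serves as an upper bound on the realized score change, which it does because $\Delta_{j^* i^*}' \leq \Delta_{j^* i^*}$ by construction, so the argument proceeds uniformly whether or not $\Delta'$ is negative.
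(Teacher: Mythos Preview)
Your proposal is correct and follows essentially the same approach as the paper: construct $\mathbf{d}'$ by deleting one incoming arc at $i^*$, decouple the minimization over $(\mathbf{y},\mathbf{r})$, bound the score change by $\Delta_{j^*i^*}$, bound the $H_{\text{max}}$ drop below by $\delta_{\text{max}}^{(i^*)}$, and handle $H_{\text{cycle}}$ via monotonicity. The only cosmetic difference is that the paper selects $j^*=\operatorname*{arg\,min}_{\{j:d_{ji^*}=1\}}\Delta_{ji^*}$, whereas you take an arbitrary $j^*$ with $d_{j^*i^*}=1$; since the hypothesis already gives $\delta_{\text{max}}^{(i^*)}>\max_{j\neq i^*}\Delta_{ji^*}\geq\Delta_{j^*i^*}$ for \emph{any} such $j^*$, your simpler choice suffices and the argmin is unnecessary.
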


\begin{restatable}[Sufficiency of ``Maximum'' Penalty Weight]{claim}{maxpenaltybound}
\label{claim:max-penalty-bound}

If $\delta_{\text{max}}^{(i)} > \max_{j\neq i} \Delta_{ji}$ for all $i$, then for all $\mathbf{d}$ such that $\max_{i}|\mathbf{d}_i|>m$, there is a $\mathbf{d}'\leq \mathbf{d}$ such that $\max_{i}|\mathbf{d}_i'|\leq m$ and $\min_{\mathbf{y}, \mathbf{r}}H(\mathbf{d}, \mathbf{y}, \mathbf{r}) > \min_{\mathbf{y}, \mathbf{r}}H(\mathbf{d}', \mathbf{y}, \mathbf{r})$.
\end{restatable}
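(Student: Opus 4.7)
The plan is to iterate the single-step result of Claim~\ref{claim:max-penalty-bound-1}. Starting from $\mathbf{d}^{(0)} \equiv \mathbf{d}$, I would define a (possibly empty) sequence $\mathbf{d}^{(0)}, \mathbf{d}^{(1)}, \dots, \mathbf{d}^{(T)}$ recursively: at step $t$, if $\max_i |\mathbf{d}^{(t)}_i| \leq m$ terminate and set $\mathbf{d}' \equiv \mathbf{d}^{(t)}$; otherwise choose any $i^*$ with $|\mathbf{d}^{(t)}_{i^*}| > m$ and let $\mathbf{d}^{(t+1)}$ be the element produced by Claim~\ref{claim:max-penalty-bound-1} applied to $\mathbf{d}^{(t)}$ at index $i^*$, so that $|\mathbf{d}^{(t+1)}_{i^*}| = |\mathbf{d}^{(t)}_{i^*}| - 1$, $\mathbf{d}^{(t+1)}_j = \mathbf{d}^{(t)}_j$ for $j\neq i^*$, and
\begin{equation}
\min_{\mathbf{y}, \mathbf{r}} H(\mathbf{d}^{(t)}, \mathbf{y}, \mathbf{r}) > \min_{\mathbf{y}, \mathbf{r}} H(\mathbf{d}^{(t+1)}, \mathbf{y}, \mathbf{r}).
\end{equation}

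Next I would verify termination using the potential $\Phi(\mathbf{d}) \equiv \sum_i \max\{0, |\mathbf{d}_i| - m\}$, a nonnegative integer. At each iteration, the chosen $i^*$ satisfies $|\mathbf{d}^{(t)}_{i^*}| > m$, so its summand drops by exactly one while the others are unchanged; hence $\Phi(\mathbf{d}^{(t+1)}) = \Phi(\mathbf{d}^{(t)}) - 1$. The process therefore halts after $T = \Phi(\mathbf{d})$ steps with $\max_i|\mathbf{d}'_i| \leq m$.

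Finally I would chain the two monotonic properties across the iteration. Because each step only alters $\mathbf{d}_{i^*}$ by flipping a single entry $d_{j i^*}$ from $1$ to $0$ (this is what it means for $|\mathbf{d}^{(t+1)}_{i^*}|$ to drop by one while other in-arc vectors stay fixed), we have $\mathbf{d}^{(t+1)} \leq \mathbf{d}^{(t)}$ componentwise; by transitivity, $\mathbf{d}' \leq \mathbf{d}$. Likewise, chaining the strict inequalities above gives
\begin{equation}
\min_{\mathbf{y}, \mathbf{r}} H(\mathbf{d}, \mathbf{y}, \mathbf{r}) > \min_{\mathbf{y}, \mathbf{r}} H(\mathbf{d}', \mathbf{y}, \mathbf{r}),
\end{equation}
which is the desired conclusion. (If $T=0$ the statement is vacuous for this $\mathbf{d}$, and if $T \geq 1$ we get strict inequality from the first step alone, propagated by weak inequalities through later steps; in fact every step is strict.)

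The real content sits inside Claim~\ref{claim:max-penalty-bound-1}, where the hypothesis $\delta_{\text{max}}^{(i)} > \max_{j\neq i}\Delta_{ji}$ is consumed; the global statement is essentially bookkeeping. The only subtlety to watch is confirming that the one-step claim really flips a $d_{j i^*}$ entry from $1$ to $0$ (rather than some bookkeeping that would increase other bits), so that the componentwise $\mathbf{d}' \leq \mathbf{d}$ relation survives chaining. This is immediate from the statement of Claim~\ref{claim:max-penalty-bound-1}, which fixes $\mathbf{d}_j' = \mathbf{d}_j$ for $j \neq i^*$ and only decreases $|\mathbf{d}_{i^*}|$; combined with Claims~\ref{claim:Hmax-monotonicity} and~\ref{Hcycle-monotonicity}, it ensures the iteration stays within the regime where the bounds apply at each intermediate step.
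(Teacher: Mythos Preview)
Your proposal is correct and follows essentially the same approach as the paper: both iterate Claim~\ref{claim:max-penalty-bound-1} until every in-degree is at most $m$, chaining the strict inequalities and the componentwise ordering. The paper organizes the iteration with a double index $\mathbf{d}^{(i,x)}$ (processing vertices one at a time and reducing the excess $x$ to zero), whereas you use a single index with the potential $\Phi(\mathbf{d}) = \sum_i \max\{0, |\mathbf{d}_i| - m\}$ to certify termination; this is a cosmetic difference, and if anything your bookkeeping is slightly cleaner. One small caveat: the \emph{statement} of Claim~\ref{claim:max-penalty-bound-1} only guarantees $|\mathbf{d}'_{i^*}| = |\mathbf{d}_{i^*}| - 1$, which does not by itself force $\mathbf{d}'_{i^*} \leq \mathbf{d}_{i^*}$ componentwise (in principle one could flip two bits down and one up); the componentwise ordering really comes from the explicit construction in the \emph{proof} of Claim~\ref{claim:max-penalty-bound-1}. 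The paper's own proof of Claim~\ref{claim:max-penalty-bound} makes the same implicit appeal, so you are in good company, but it would be cleaner to cite the construction rather than the statement.
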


\subsection{``Reduction'' Penalty Weights}
The degree of the ``score'' Hamiltonian $H_{\text{score}}$ is natively $m$-local as constructed. If $m=2$, as it often will be in practice, the total Hamiltonian is natively quadratic. If $m>2$, additional ancilla bits are needed to reduce the locality. The general method for doing this is to replace the conjunction of a pair bits with an ancilla bit and to add a penalty term with sufficiently strong weighting that penalizes states in which the ancillary bit is not equal to the conjunction to which it should be. For $m=3$, this can be done using $n\left\lfloor (n-2)^2 / 4\right\rfloor$ ancilla bits, but no more, where each $H_{\text{score}}^{(i)}$ containing $n-1$ arc bits is quadratized independently; furthermore, heuristic methods have been developed that reduce needed weight of the penalty terms \cite{babbush2014c}. For $m=4$, at most $n\binom{n-1}{2}$ ancilla bits are needed. More generally, $\mathcal O(n^{2\log d})$ is ancilla bits are needed \cite{boros2014}. Because the proof of the bounds on the other penalty weights are secular as to the degree of $H_{\text{score}}$ so long as $\{\Delta_{ij}\}$ is computed appropriately, the quadratization of $H_{\text{score}}$, including the addition of penalty terms and the needed weights therefor, can be done using the standard methods described in the literature independent of the other penalties described here.

\subsection{``Cycle'' Penalty Weights}
First, we that if the consistency penalty is set high enough, for any $\mathbf d$ encoding a graph with a 2-cycle, there is some $\mathbf d'$ encoding one whose minimal value of $H$ over $\mathbf y, \mathbf r$ is strictly less than that of $\mathbf d$.
\begin{restatable}[Removal of 2-cycles.]{claim}{removaltwocycle}
\label{claim:2-cycle-removal}
If $\delta_{\text{consist}}^{(ij)} > \max\{\Delta_{ij}, \Delta_{ji}\}$ for all $1\leq i < j \leq n$ , then for all $\mathbf d$ such that $G(\mathbf d)$ contains a 2-cycle, there is some $\mathbf d'\leq \mathbf d$ such that 
$G(\mathbf d')$ does not contain a 2-cycle and 
$\min_{\mathbf y, \mathbf r}H(\mathbf d, \mathbf y, \mathbf r) >
\min_{\mathbf y, \mathbf r}H(\mathbf d', \mathbf y, \mathbf r)$.
\end{restatable}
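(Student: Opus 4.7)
The plan is to exhibit, for any $\mathbf d$ containing a 2-cycle on some pair $\{i,j\}$ (i.e.\ $d_{ij}=d_{ji}=1$), a specific $\mathbf d'\leq\mathbf d$ that removes one of the two reversed arcs and strictly decreases the minimized total energy; iterating the construction eliminates every 2-cycle while continuing to decrease the energy, yielding the claim. The key observation is that $H_{\text{consist}}^{(ij)}(1,1,r_{ij}) = \delta_{\text{consist}}^{(ij)}$ \emph{independently} of $r_{ij}$, so a 2-cycle is guaranteed to incur one full consistency penalty that we can release by flipping a single arc bit.

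Since $H_{\text{max}}$ and $H_{\text{cycle}}$ act on disjoint auxiliary sets $\mathbf y$ and $\mathbf r$, the minimum decomposes as $\min_{\mathbf y,\mathbf r} H(\mathbf d,\mathbf y, \mathbf r) = H_{\text{score}}(\mathbf d) + \min_{\mathbf y} H_{\text{max}}(\mathbf d,\mathbf y) + \min_{\mathbf r} H_{\text{cycle}}(\mathbf d,\mathbf r)$, so each contribution can be bounded separately. Fix a minimizer $\mathbf r^*$ of $H_{\text{cycle}}(\mathbf d,\cdot)$ and case-split on $r_{ij}^*$: if $r_{ij}^*=1$ let $\mathbf d'$ be $\mathbf d$ with $d_{ji}$ flipped to $0$, and if $r_{ij}^*=0$ let $\mathbf d'$ be $\mathbf d$ with $d_{ij}$ flipped to $0$. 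Evaluating $H_{\text{cycle}}(\mathbf d',\mathbf r^*)$ at the same $\mathbf r^*$, the $(ij)$ consistency term drops from $\delta_{\text{consist}}^{(ij)}$ to $0$ in either case, while every other consistency term (depending only on the unchanged arc bits for pairs $\{k,l\}\neq\{i,j\}$) and every transitivity term (depending only on $\mathbf r$) is unaffected. Hence $\min_{\mathbf r}H_{\text{cycle}}(\mathbf d',\mathbf r) \leq \min_{\mathbf r}H_{\text{cycle}}(\mathbf d,\mathbf r) - \delta_{\text{consist}}^{(ij)}$.

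For the remaining two contributions, the definition of $\Delta_{ji}$ (respectively $\Delta_{ij}$) bounds the increase in $H_{\text{score}}$ from removing the chosen arc by at most $\Delta_{ji}$ (respectively $\Delta_{ij}$), and Claim \ref{claim:Hmax-monotonicity} gives $\min_{\mathbf y}H_{\text{max}}(\mathbf d',\mathbf y) \leq \min_{\mathbf y}H_{\text{max}}(\mathbf d,\mathbf y)$ since $\mathbf d'\leq \mathbf d$. Summing the three bounds yields $\min_{\mathbf y,\mathbf r}H(\mathbf d,\mathbf y,\mathbf r) - \min_{\mathbf y,\mathbf r}H(\mathbf d',\mathbf y,\mathbf r) \geq \delta_{\text{consist}}^{(ij)} - \max\{\Delta_{ij},\Delta_{ji}\} > 0$ under the hypothesis, as desired.

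The main obstacle is choosing \emph{which} of the two reversed arcs to delete. A naive choice (e.g.\ always removing $d_{ji}$) can fail to decrease $H_{\text{consist}}^{(ij)}$ at $\mathbf r^*$, and re-optimizing $\mathbf r$ would then force us to control the transitivity penalties $H_{\text{trans}}^{(ijk)}$ coupled to $r_{ij}$ for every third vertex $k$. The case split on $r_{ij}^*$ sidesteps this by letting us reuse the very same $\mathbf r^*$ for $\mathbf d'$, isolating the algebraic savings to a single consistency term and avoiding any interaction with the $H_{\text{trans}}$ terms or with other $H_{\text{consist}}$ terms.
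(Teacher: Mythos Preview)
Your proof is correct and follows essentially the same approach as the paper: fix $\mathbf r^*=\operatorname*{arg\,min}_{\mathbf r}H_{\text{cycle}}(\mathbf d,\mathbf r)$, case-split on $r_{ij}^*$ to decide which of the two reversed arcs to delete so that the $(ij)$ consistency term drops by exactly $\delta_{\text{consist}}^{(ij)}$ when evaluated at the same $\mathbf r^*$, bound the score increase by the appropriate $\Delta$, invoke monotonicity of $H_{\text{max}}$, and iterate to remove all 2-cycles. Your observation that $H_{\text{consist}}^{(ij)}(1,1,r_{ij})=\delta_{\text{consist}}^{(ij)}$ regardless of $r_{ij}$ is a clean way to motivate the construction, but the mechanics of the argument match the paper's proof line for line.
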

Second, we show that for any $\mathbf d$ that encodes a digraph without a 2-cycle, the minimal value of $H_{\text{consist}}$ over all $\mathbf r$ is zero.

\begin{restatable}[Sufficiency of ``Consistency'' Penalty Weights]{claim}{hconsistpenalty}
\label{claim:consistency-penalty-bound}

If $\delta_{\textnormal{consist}}^{(ij)}>(n-2)\max_{k\notin\{i, j\}}\delta_{\textnormal{trans}}^{(ijk)}$ for $1\leq i < j \leq n$, then for all $\mathbf{d}$ such that $G(\mathbf d)$ contains no 2-cycle, 
$H_{\textnormal{consist}}(\mathbf{d}, \mathbf{r}^*)=0$, where $\mathbf{r}^* = \operatorname*{arg\,min}_{\mathbf{r}}H_{\textnormal{cycle}}(\mathbf{d}, \mathbf{r})$.
\end{restatable}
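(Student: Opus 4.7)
The plan is to argue by a local-improvement step: if $\mathbf r$ is any assignment with $H_{\textnormal{consist}}(\mathbf d, \mathbf r) > 0$, I will exhibit a single-coordinate flip producing $\mathbf r'$ with $H_{\textnormal{cycle}}(\mathbf d, \mathbf r') < H_{\textnormal{cycle}}(\mathbf d, \mathbf r)$, which shows any minimizer $\mathbf r^*$ of $H_{\textnormal{cycle}}$ has $H_{\textnormal{consist}}(\mathbf d, \mathbf r^*) = 0$. Suppose $H_{\textnormal{consist}}(\mathbf d, \mathbf r) > 0$. Then there is some pair $(i,j)$ with $i<j$ contributing a penalty, i.e.\ $H_{\textnormal{consist}}^{(ij)}(d_{ij}, d_{ji}, r_{ij}) = \delta_{\textnormal{consist}}^{(ij)}$. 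Define $\mathbf r'$ by $r_{ij}' = 1 - r_{ij}$, leaving all other entries unchanged.

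Next I would quantify the change in each piece of $H_{\textnormal{cycle}}$. Because $G(\mathbf d)$ has no 2-cycle, at most one of $d_{ij}, d_{ji}$ equals $1$; inspection of the explicit formula for $H_{\textnormal{consist}}^{(ij)}$ then shows that in all three admissible subcases ($d_{ij}=1$, $d_{ji}=1$, or both zero) there is at least one value of $r_{ij}$ on which $H_{\textnormal{consist}}^{(ij)}$ vanishes; hence flipping $r_{ij}$ drives this term to $0$. Since no other $H_{\textnormal{consist}}^{(i'j')}$ involves $r_{ij}$, the net decrease in $H_{\textnormal{consist}}$ is exactly $\delta_{\textnormal{consist}}^{(ij)}$.

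For the transitivity side, $r_{ij}$ appears only in the $n-2$ terms $H_{\textnormal{trans}}^{(ijk)}$ indexed by $k \notin \{i,j\}$. Each such term is $\{0, \delta_{\textnormal{trans}}^{(ijk)}\}$-valued, so the change from flipping $r_{ij}$ is bounded termwise, yielding
\begin{equation}
H_{\textnormal{trans}}(\mathbf r') - H_{\textnormal{trans}}(\mathbf r) \leq \sum_{k \notin \{i,j\}} \delta_{\textnormal{trans}}^{(ijk)} \leq (n-2)\max_{k \notin \{i,j\}} \delta_{\textnormal{trans}}^{(ijk)}.
\end{equation}
Combining these estimates,
\begin{equation}
H_{\textnormal{cycle}}(\mathbf d, \mathbf r') - H_{\textnormal{cycle}}(\mathbf d, \mathbf r) \leq (n-2)\max_{k \notin \{i,j\}} \delta_{\textnormal{trans}}^{(ijk)} - \delta_{\textnormal{consist}}^{(ij)} < 0
\end{equation}
by the hypothesis, contradicting minimality of $\mathbf r$.

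The conceptual crux, and the only part that requires real care, is the first bookkeeping step: one must verify that the chosen flip does cleanly zero out the offending $H_{\textnormal{consist}}^{(ij)}$ term without introducing a new violation in a different consistency term. The no-2-cycle hypothesis is precisely what makes this true, since it rules out the pathological case $d_{ij}=d_{ji}=1$ in which both values of $r_{ij}$ would incur a penalty and the flip argument would collapse. After that, the transitivity bound is a straightforward union-bound count over the $n-2$ triples touching $r_{ij}$, and invoking the stated lower bound on $\delta_{\textnormal{consist}}^{(ij)}$ closes the argument.
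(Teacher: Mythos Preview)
Your proof is correct and follows essentially the same approach as the paper's: both argue via the contrapositive by exhibiting a single-bit flip of $\mathbf r$ that strictly decreases $H_{\textnormal{cycle}}$ whenever $H_{\textnormal{consist}}(\mathbf d,\mathbf r)>0$. The bookkeeping is the same---the offending consistency term drops by exactly $\delta_{\textnormal{consist}}^{(ij)}$ (the no-2-cycle hypothesis being precisely what guarantees the flipped value incurs no new consistency penalty), while the transitivity change is bounded by $(n-2)\max_{k\notin\{i,j\}}\delta_{\textnormal{trans}}^{(ijk)}$ via the $n-2$ triples containing $\{i,j\}$.
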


Third, we show that for any $\mathbf d$ that encodes a digraph not containing a 2-cycle but that is not a DAG, there is some $\mathbf d'$ that does encode a DAG and whose minimal value of $H$ over all $\mathbf y, \mathbf r$ is strictly less than that of $\mathbf d$.

\begin{restatable}[Sufficiency of ``Transitivity'' Penalty Weights]{claim}{htranspenalty}
\label{claim:transitivity-penalty-bound}

If $\delta_{\textnormal{consist}}^{(ij)}>(n-2)\max_{k\notin\{i, j\}}\delta_{\textnormal{trans}}^{(ijk)}$ for $1\leq i < j \leq n$ and $\delta_{\textnormal{trans}}^{(ijk)}=\delta_{\textnormal{trans}}>\max_{\substack{1\leq i', j'\leq n\\i'\neq j'}}\Delta_{i'j'}$ for $1\leq i < j < k \leq n$, then for all $\mathbf d$ such that $G(\mathbf d)$ does not contain a 2-cycle but does contain a directed cycle there is some $\mathbf d'$ such that $G(\mathbf d')$ is a DAG and $\min_{\mathbf y, \mathbf r}H(\mathbf d, \mathbf y, \mathbf r) > \min_{\mathbf y, \mathbf r}H(\mathbf d', \mathbf y, \mathbf r)$.  
\end{restatable}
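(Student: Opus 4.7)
The plan is to use the previous claims to reduce the problem to a structural argument about the tournament defined by the optimal order bits, then to construct an explicit DAG witness $\mathbf{d}'$ by deleting the arcs of $\mathbf d$ inconsistent with a Hamiltonian path of that tournament.

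First I would apply Claim~\ref{claim:consistency-penalty-bound}: under the hypothesized consistency penalty, the minimizer $\mathbf{r}^* = \operatorname*{arg\,min}_{\mathbf{r}} H_{\text{cycle}}(\mathbf{d}, \mathbf{r})$ satisfies $H_{\text{consist}}(\mathbf{d}, \mathbf{r}^*) = 0$, so $\mathbf{r}^*$ encodes a tournament $T^*$ whose arc set contains $E(G(\mathbf{d}))$. Since $G(\mathbf{d})$ has a directed cycle, so does $T^*$, and by the standard fact that every non-transitive tournament contains a directed $3$-cycle we have $H_{\text{trans}}(\mathbf{r}^*) \geq \delta_{\text{trans}}$.

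Next I would construct $\mathbf{d}'$ using a Hamiltonian path of $T^*$, which exists by R\'edei's theorem, giving a linear order $\prec$ on the vertices. Letting $B$ denote the arcs of $G(\mathbf{d})$ that are backward in $\prec$ and defining $\mathbf{d}' \leq \mathbf{d}$ by removing exactly $B$, one gets $G(\mathbf{d}')$ a DAG with topological order $\prec$; the order-encoding $\mathbf{r}'$ is then both transitive and consistent with $\mathbf{d}'$, so $H_{\text{cycle}}(\mathbf{d}', \mathbf{r}') = 0$. Combining with monotonicity of $H_{\text{max}}$ (Claim~\ref{claim:Hmax-monotonicity}) and iterating the definition of $\Delta_{ab}$ as the worst-case per-arc score change, one obtains
\begin{equation*}
\min_{\mathbf{y}, \mathbf{r}} H(\mathbf{d}, \mathbf{y}, \mathbf{r}) - \min_{\mathbf{y}, \mathbf{r}} H(\mathbf{d}', \mathbf{y}, \mathbf{r})
\geq H_{\text{trans}}(\mathbf{r}^*) - \sum_{(a,b) \in B} \Delta_{ab}.
\end{equation*}
Since $\delta_{\text{trans}} > \max \Delta$ by hypothesis, $\sum_{B} \Delta_{ab} < |B|\,\delta_{\text{trans}}$, so it would remain to show that $T^*$ has at least $|B|$ distinct directed $3$-cycles.

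The hardest part is exactly this combinatorial lemma on tournaments: the number of $3$-cycles of $T^*$ must dominate the number of backward arcs of $\mathbf{d}$ with respect to $\prec$. The natural approach is an injection: for each $(u,v) \in B$ with $v \prec u$, concatenate this arc with the forward Hamiltonian-path arcs from $v$ to $u$ to produce a directed cycle in $T^*$, iteratively shortcut using tournament chords to extract a canonical $3$-cycle containing $(u,v)$, and argue that distinct backward arcs in $\mathbf d$ yield distinct $3$-cycles. If this direct injection becomes too delicate (for example because several backward arcs could share a shortcut), a fallback would be an inductive argument that at each step removes a single arc of a shortest directed cycle (necessarily of length $\geq 3$ since $\mathbf{d}$ has no $2$-cycle) and establishes a drop of at least $\delta_{\text{trans}}$ in $\min_{\mathbf{r}} H_{\text{cycle}}$ via a targeted $\mathbf{r}$-bit flip on that $3$-cycle, recursing until a DAG is obtained.
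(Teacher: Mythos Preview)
Your primary route via R\'edei's theorem does not go through: the combinatorial lemma ``the number of directed $3$-cycles in $T^*$ is at least $|B|$'' is false for an arbitrary Hamiltonian path, and your proposed injection cannot rescue it. A concrete obstruction at $n=4$: let $\mathbf d$ itself be the tournament with arcs $1\to2$, $2\to3$, $3\to4$, $4\to1$, $3\to1$, $4\to2$. Then the only $\mathbf r$ with $H_{\text{consist}}=0$ is $\mathbf r^*=\mathbf d$, and $T^*$ has exactly two $3$-cycles ($\{1,2,3\}$ and $\{2,3,4\}$), yet the Hamiltonian path $1\to2\to3\to4$ gives $|B|=3$. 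Worse, the backward arc $4\to1$ lies in \emph{no} $3$-cycle of $T^*$, so no shortcutting rule can assign it one.

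Your fallback has the right shape---iterate, delete one arc, drop $\min_{\mathbf r}H_{\text{cycle}}$ by at least $\delta_{\text{trans}}$---but as written it also has a gap. You select the arc from a shortest cycle of $G(\mathbf d)$ and then flip the corresponding $\mathbf r$-bit; however, reversing an arc $u\to v$ in a tournament changes the $3$-cycle count by $d^+(u)-d^+(v)-1$, which can be nonnegative (e.g.\ $4\to1$ above gives change $0$), so the targeted flip need not lower $H_{\text{trans}}$. The paper makes the selection in the \emph{tournament} $T^*$ rather than in $G(\mathbf d)$: it first picks an arc $(i^*,j^*)$ of $T^*$ whose reversal strictly decreases the $3$-cycle count (such an arc exists whenever $T^*$ is non-transitive), and only then argues that necessarily $d_{i^*j^*}=1$---because if $d_{i^*j^*}=0$ (and $d_{j^*i^*}=0$ since there is no $2$-cycle), flipping that $\mathbf r$-bit would keep $H_{\text{consist}}=0$ while lowering $H_{\text{trans}}$, contradicting the minimality of $\mathbf r^*$. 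That observation, that a triangle-reducing arc of $T^*$ is forced to lie in $\mathbf d$, is exactly the missing ingredient in your fallback.
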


Lastly, we show that for all $\mathbf d$ that encode a digraph that is not a DAG, there is some $\mathbf d'$ that does encode a DAG and whose minimal value of $H$ over all $\mathbf y, \mathbf r$ is strictly less than that of $\mathbf d$.
\begin{restatable}[Sufficiency of ``Cycle'' Penalty Weights]{claim}{hcyclepenalty}
\label{claim:cycle-penalty-bound}
If $\delta_{\textnormal{consist}}^{(ij)}>(n-2)\max_{k\notin\{i,j\}}\delta_{\textnormal{trans}}^{(ijk)}$ for all $1\leq i < j \leq n$ and $\delta_{\textnormal{trans}}^{(ijk)}=\delta_{\textnormal{trans}}>\max_{\substack{1\leq i, j \leq n\\ i \neq j}}$ for all $1\leq i < j < k \leq n$, then for all $\mathbf{d}$ such that $G(\mathbf d)$ contains a directed cycle, there is a $\mathbf{d}'\leq \mathbf{d}$ such that $G(\mathbf{d}')$ is a DAG, and
$\min_{\mathbf{y}, \mathbf{r}}H(\mathbf{d}', \mathbf{y}, \mathbf{r}) < 
\min_{\mathbf{y}, \mathbf{r}}H(\mathbf{d}, \mathbf{y}, \mathbf{r})$.
\end{restatable}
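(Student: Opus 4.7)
The plan is to chain together Claims \ref{claim:2-cycle-removal} and \ref{claim:transitivity-penalty-bound}, iterating the former to first eliminate all 2-cycles and then applying the latter to eliminate any remaining longer cycles. The structure of $G(\mathbf{d})$ naturally splits into two cases, and in both cases we will produce a descending sequence $\mathbf{d} \geq \mathbf{d}^{(1)} \geq \mathbf{d}^{(2)} \geq \cdots \geq \mathbf{d}'$ with strictly decreasing energies, terminating at a $\mathbf{d}'$ whose encoded graph is a DAG.

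First I would check that the hypotheses here imply the hypotheses of Claim \ref{claim:2-cycle-removal}. Assuming $n\geq 3$ (the claim being vacuous otherwise), the given bound $\delta_{\textnormal{consist}}^{(ij)} > (n-2) \max_{k\notin\{i,j\}} \delta_{\textnormal{trans}}^{(ijk)}$ together with $\delta_{\textnormal{trans}} > \max_{i'\neq j'} \Delta_{i'j'}$ yields
\begin{equation}
\delta_{\textnormal{consist}}^{(ij)} > (n-2)\delta_{\textnormal{trans}} \geq \delta_{\textnormal{trans}} > \max\{\Delta_{ij}, \Delta_{ji}\},
\end{equation}
so Claim \ref{claim:2-cycle-removal} is applicable to any $\mathbf{d}$ whose encoded graph contains a 2-cycle. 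Likewise the hypotheses here are exactly those of Claim \ref{claim:transitivity-penalty-bound}.

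Next, given $\mathbf{d}$ such that $G(\mathbf{d})$ contains a directed cycle, I would set $\mathbf{d}^{(0)}:=\mathbf{d}$ and iteratively apply Claim \ref{claim:2-cycle-removal}: while $G(\mathbf{d}^{(t)})$ contains a 2-cycle, obtain $\mathbf{d}^{(t+1)}\leq \mathbf{d}^{(t)}$ with $G(\mathbf{d}^{(t+1)})$ having one fewer arc (hence strictly fewer 2-cycles) and
\begin{equation}
\min_{\mathbf y, \mathbf r} H(\mathbf{d}^{(t+1)}, \mathbf y, \mathbf r) < \min_{\mathbf y, \mathbf r} H(\mathbf{d}^{(t)}, \mathbf y, \mathbf r).
\end{equation}
Because $|\mathbf{d}^{(t)}|$ strictly decreases and is bounded below by $0$, this process terminates at some $\mathbf{d}^{(T)}\leq \mathbf{d}$ whose graph has no 2-cycle and whose energy is no larger than that of $\mathbf{d}$ (strictly smaller if $T\geq 1$).

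Finally I would split on whether $G(\mathbf{d}^{(T)})$ is already a DAG. If so, set $\mathbf{d}':=\mathbf{d}^{(T)}$; since we entered the iteration only because $G(\mathbf{d})$ contained a cycle, either $T\geq 1$ (giving strict decrease from the 2-cycle removals) or $T=0$ in which case $G(\mathbf{d})$ itself had a cycle but no 2-cycle, contradicting our assumption that we are in this sub-case — so $T\geq 1$ and the strict inequality holds. Otherwise, $G(\mathbf{d}^{(T)})$ contains a directed cycle of length $\geq 3$ but no 2-cycle, and I invoke Claim \ref{claim:transitivity-penalty-bound} on $\mathbf{d}^{(T)}$ to produce $\mathbf{d}'$ with $G(\mathbf{d}')$ a DAG and $\min_{\mathbf y, \mathbf r} H(\mathbf{d}', \mathbf y, \mathbf r) < \min_{\mathbf y, \mathbf r} H(\mathbf{d}^{(T)}, \mathbf y, \mathbf r)$. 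Composing with the earlier inequalities gives the desired strict descent from $\mathbf{d}$, and since each step produced $\mathbf{d}^{(t+1)}\leq \mathbf{d}^{(t)}$ and Claim \ref{claim:transitivity-penalty-bound} can also be strengthened (or its proof inspected) to produce $\mathbf{d}'\leq \mathbf{d}^{(T)}$, we obtain $\mathbf{d}'\leq \mathbf{d}$.

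The main subtlety — and the only step where I anticipate needing care — is the last point: the stated Claim \ref{claim:transitivity-penalty-bound} does not explicitly assert $\mathbf{d}'\leq \mathbf{d}$, so to conclude $\mathbf{d}'\leq \mathbf{d}$ here I would either need to appeal to that claim's proof (which presumably removes an arc from a 3-cycle in the order bits, hence gives $\mathbf{d}'\leq \mathbf{d}^{(T)}$) or re-derive the monotone version. Apart from this bookkeeping, the proof is a straightforward reduction to the previous claims plus the monotonicity Claims \ref{claim:Hmax-monotonicity} and \ref{Hcycle-monotonicity} that underlie them.
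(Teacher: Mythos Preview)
Your proposal is correct and follows essentially the same route as the paper: first use Claim~\ref{claim:2-cycle-removal} to pass to a $\mathbf{d}''\leq\mathbf{d}$ with no 2-cycles and strictly smaller energy, then (if $G(\mathbf{d}'')$ is not already a DAG) invoke Claim~\ref{claim:transitivity-penalty-bound} to reach a DAG. Two minor remarks: (i) Claim~\ref{claim:2-cycle-removal} as stated already returns a $\mathbf{d}'$ with \emph{no} 2-cycles (the iteration is inside its proof), so your external loop is redundant, though harmless; (ii) you correctly flag that Claim~\ref{claim:transitivity-penalty-bound} does not literally assert $\mathbf{d}'\leq\mathbf{d}$---the paper's proof simply asserts this when invoking it, and indeed it follows from that claim's proof, which only deletes arcs.
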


\subsection{Overall Sufficiency}
Finally, we show that the digraph encoded in the ground state of the total Hamiltonian $H$ is a DAG and has a maximum parent set size of at most $m$, and that it is the solution to the corresponding \textsc{BNSL} instance.

\begin{restatable}[Overall Sufficiency]{claim}{overallpenalty}
\label{claim:total-bounds}
If 
$\delta_{\textnormal{max}}^{(i)} > \max_{j\neq i}\Delta_{ji}$
for all $1\leq i \leq n$, 
$\delta_{\textnormal{consist}}^{(ij)} > 
(n-2)\max_{k\notin\{i,j\}}\delta_{\textnormal{trans}}^{(ijk)}$ 
for all $1\leq i < j \leq n$ and 
$\delta_{\textnormal{trans}}^{(ijk)}
=\delta_{\textnormal{trans}}
> \max_{\substack{1\leq i',j'\leq n\\i'\neq j'}}\Delta_{i'j'}$ 
for all $1\leq i < j < k \leq n$, then 
$H(\mathbf{d}^*, \mathbf y, \mathbf r) = 
\min_{\substack{\max_i|\mathbf{d}_i|\leq m \\ G(\mathbf{d})\, \textnormal{is a DAG}}} H_{\textnormal{score}}(\mathbf{d}, \mathbf y, \mathbf r)$,
$G(\mathbf d^*)$ is a DAG, and
$\max_i|\mathbf d_i^*| \leq m$, where
$\mathbf d^* = 
\operatorname{arg\,min}_{\mathbf d}\left\{\min_{\mathbf y, \mathbf r}H(\mathbf d, \mathbf y, \mathbf r)\right\}$.
\end{restatable}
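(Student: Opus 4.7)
The plan is to argue by contradiction using the two ``reduction'' claims already established, namely Claim \ref{claim:max-penalty-bound} (sufficiency of the maximum-parents weights) and Claim \ref{claim:cycle-penalty-bound} (sufficiency of the cycle-related weights). The conditions on the penalty weights in the hypothesis of the current claim are exactly those required to invoke both of these earlier results, so the strategy reduces to a short logical argument on top of them.

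First I would let $\mathbf d^*$ be any global minimizer of $\min_{\mathbf y, \mathbf r} H(\mathbf d, \mathbf y, \mathbf r)$ over $\mathbf d$, and suppose toward contradiction that $\mathbf d^*$ violates at least one of the two constraints. There are two cases. If $\max_i |\mathbf d_i^*| > m$, then since the hypothesis gives $\delta_{\textnormal{max}}^{(i)} > \max_{j\neq i} \Delta_{ji}$, Claim \ref{claim:max-penalty-bound} produces a $\mathbf d' \leq \mathbf d^*$ with $\max_i |\mathbf d_i'| \leq m$ and $\min_{\mathbf y, \mathbf r} H(\mathbf d', \mathbf y, \mathbf r) < \min_{\mathbf y, \mathbf r} H(\mathbf d^*, \mathbf y, \mathbf r)$, contradicting the optimality of $\mathbf d^*$. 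Otherwise, $G(\mathbf d^*)$ contains a directed cycle, and the hypotheses on $\delta_{\textnormal{consist}}^{(ij)}$ and $\delta_{\textnormal{trans}}^{(ijk)}$ are exactly those required by Claim \ref{claim:cycle-penalty-bound}, which yields a $\mathbf d' \leq \mathbf d^*$ such that $G(\mathbf d')$ is a DAG and $\min_{\mathbf y, \mathbf r} H(\mathbf d', \mathbf y, \mathbf r) < \min_{\mathbf y, \mathbf r} H(\mathbf d^*, \mathbf y, \mathbf r)$, again contradicting optimality. (Note that the two cases are independent: each earlier claim only needs its own constraint to be violated, so I do not have to track whether the ``fix'' for one constraint preserves the other.)

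Having excluded both violations, $\mathbf d^*$ satisfies $\max_i |\mathbf d_i^*| \leq m$ and $G(\mathbf d^*)$ is a DAG. By construction of $H_{\textnormal{max}}$ and $H_{\textnormal{cycle}}$, one can then choose $\mathbf y^*$ with $y_i^* = m - |\mathbf d_i^*|$ and $\mathbf r^*$ encoding any topological order of $G(\mathbf d^*)$ so that $H_{\textnormal{max}}(\mathbf d^*, \mathbf y^*) = 0$ and $H_{\textnormal{cycle}}(\mathbf d^*, \mathbf r^*) = 0$; these are the minima of two nonnegative functions, so
\begin{equation}
\min_{\mathbf y, \mathbf r} H(\mathbf d^*, \mathbf y, \mathbf r) = H_{\textnormal{score}}(\mathbf d^*).
\end{equation}
The same equality holds for every feasible $\mathbf d$, so the definition of $\mathbf d^*$ as an arg min over all $\mathbf d$ immediately gives $H_{\textnormal{score}}(\mathbf d^*) \leq H_{\textnormal{score}}(\mathbf d)$ for every feasible $\mathbf d$, establishing that $\mathbf d^*$ attains the constrained minimum of $H_{\textnormal{score}}$.

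The only potential obstacle is ensuring that the hypotheses of Claims \ref{claim:max-penalty-bound} and \ref{claim:cycle-penalty-bound} are satisfied simultaneously under the stated assumptions, and that no circular dependence between the two case analyses arises. Since each earlier claim exhibits its own $\mathbf d' \leq \mathbf d^*$ with strictly lower energy whenever ``its'' constraint is violated, and the contradiction in each case is obtained in a single step (without needing the other constraint to be preserved), no such circularity occurs and the proof is essentially a bookkeeping argument on top of the previously established claims.
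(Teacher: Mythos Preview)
Your proposal is correct and follows essentially the same approach as the paper: argue by contradiction that any $\mathbf d^*$ violating either constraint admits a strictly lower-energy $\mathbf d'$ via Claim~\ref{claim:max-penalty-bound} or Claim~\ref{claim:cycle-penalty-bound}, then conclude $\mathbf d^*$ is feasible. You actually supply slightly more detail than the paper does, explicitly verifying that $\min_{\mathbf y,\mathbf r} H(\mathbf d,\mathbf y,\mathbf r)=H_{\textnormal{score}}(\mathbf d)$ for every feasible $\mathbf d$ and thereby justifying the final equality.
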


The strict inequalities used in the specification of the lower bounds ensures that the global ground state is a score-maximizing DAG with maximum parent set size $m$, but replacing them with weak inequalities is sufficient to ensure that the ground state energy is the greatest score over all DAGs with maximum parent set size $m$. However, the latter is of little interest in the present situation because it is the DAG itself that is of interest, not its score per se.

\section{Conclusion}
\label{sec:conclusion}
We have introduced a mapping from the native formulation of \textsc{BNSL} to
\textsc{QUBO} that enables the solution of the former using novel methods. 

The mapping is unique
amongst known mappings of optimization problems to \textsc{QUBO} in that the
logical structure is instance-independent for a given problem size. This
enables the expenditure of considerably more computational resources on the
problem of embedding the logical structure into a physical device because such
an embedding need only be done once and reused for new instances. The problem
addressed, \textsc{BNSL}, is special among optimization problems in that approximate solutions thereto often have value rivaling that of the exact solution. This property, along with the general intractability of exact solution, implies the great value of efficient heuristics such as SA or QA implemented using this mapping.

At present, only problems of up to seven BN variables can be embedded in
existing quantum annealing hardware (i.e. the D-Wave Two chip installed at NASA
Ames Research Center), whereas classical methods are able to deal with many of tens of BN variables.
Nevertheless, the quantum state of the art is quickly advancing, and it is
conceivable that quantum annealing could be competitively applied to
\textsc{BNSL} in the near future.
Given the already advanced state of classical simulated annealing code,
it is similarly conceivable that its application to the \textsc{QUBO} form
described here could be competitive with other classical methods for solving
\textsc{BNSL}.

\section{Acknowledgements}
This work was supported by the AFRL Information Directorate under grant F4HBKC4162G001.  All opinions, findings, conclusions, and recommendations expressed in this material are those of the authors and do not necessarily reflect the views of AFRL. The authors would also like to acknowledge support from the NASA Advanced Exploration Systems program and NASA Ames Research Center.
R. B. and A. A.-G. were supported by the National Science Foundation under award NSF CHE-1152291. The authors are grateful to David Tempel, Ole Mengshoel, and Eleanor Rieffel for useful discussions.

\appendix
\section*{Appendix}
\section{Calculation of $\Delta_{ij}$}
\label{app:delta-calc}
Recall the definition of the auxillary quantity,
\Deltadef*
Note that $H_{\text{score}}^{(i)}$ can be decomposed as
\begin{align}
H_{\text{score}}^{(i)} &=  
\sum_{\substack{J\subset\{1,\ldots,n\}\setminus\{i\} \\ |J|\leq m}}\left(w_i(J)\prod_{k\in J}d_{ki}\right) \nonumber\\
&= \sum_{\substack{J\subset\{1,\ldots,n\}\setminus\{i, j\} \\ |J|\leq m}}\left(w_i(J)\prod_{k\in J}d_{ki}\right) \nonumber\\
&\quad + \sum_{\substack{J\subset\{1,\ldots,n\}\setminus\{i,j\}\\|J|\leq m-1}}\left(w_i(J\cup \{j\})d_{ji}\prod_{k\in J}d_{ki}\right), 
\end{align}
where the first term is independent of $d_{ji}$ and thus cancels in the argument of the minimization on the right-hand side of Equation \ref{thmt@@Deltadef}. Thus $\Delta_{ji}$ simplifies to
\begin{align}
\label{eq:Delta-simp}
&{\Delta_{ji}'} \nonumber\\
&=-\min_{\{d_{ki}|k\neq i, j\}}\left\{ \sum_{\substack{J\subset\{1,\ldots,n\}\setminus\{i,j\}\\|J|\leq m-1}}\left(w_i(J\cup\{j\})\prod_{k\in J}d_{ki}\right)\right\} \nonumber\\
&= \max_{\{d_{ki}|k\neq i, j\}}\left\{ -\sum_{\substack{J\subset\{1,\ldots,n\}\setminus\{i,j\}\\|J|\leq m-1}}\left(w_i(J\cup\{j\})\prod_{k\in J}d_{ki}\right)\right\}.
\end{align}

For $m=1$, $\Delta_{ji}$ is trivially $-w_i(\{j\})$, the constant value of the expression to be extremized in Equation \ref{eq:Delta-simp} regardless of the values of $\{d_{ki}|k\neq i,j\}$. For $m=2$, $\Delta_{ji}$ can still be calculated exactly:
\begin{align}
&{\Delta_{ji}'} \nonumber\\
 &= \max_{\{d_{ki}|k\neq i,j\}}\left\{ -\sum_{\substack{J\subset\{1,\ldots,n\}\setminus\{i,j\}\\|J|\leq 1}}\left(w_i(J\cup\{j\})\prod_{k\in J}d_{ki}\right)\right\} \nonumber\\
&= \max_{\{d_{ki}|k\neq i, j\}}\left\{ -w_i(\{j\}) - \sum_{\substack{1\leq k \leq n\\ k\neq i, j}}d_{ki}\right\} \nonumber\\
&= - w_i(\{j\}) - \sum_{\substack{1\leq k \leq n \\ k \neq i, j \\ w_i\{j, k\}) < 0}}w_i(\{j, k\}) \nonumber\\
&= - w_i(\{j\}) - \sum_{\substack{1\leq k \leq n \\ k \neq i, j}}\min\{0, w_i(\{j, k\})\}.
\end{align}
However, for $m\geq 3$, calculation of the extremum in Equation \ref{eq:Delta-simp} is an intractable optimization problem in its own right and therefore we must resort to a reasonable bound. Because $\Delta_{ji}$ will be used in finding a lower bound on the necessary penalty weights, caution ditates that we use, if needed, a greater value than necessary. A reasonable upper bound on the true value is:
\begin{align}
&{\Delta_{ji}'} \nonumber\\ 
&= \max_{\{d_{ki}|k\neq i, j\}}\left\{ -\sum_{\substack{J\subset\{1,\ldots,n\}\setminus\{i,j\}\\|J|\leq m-1}}\left(w_i(J\cup\{j\})\prod_{k\in J}d_{ki}\right)\right\} \nonumber \\
&\leq  -\sum_{\substack{J\subset\{1,\ldots,n\}\setminus\{i,j\}\\|J|\leq m-1 \\ w_i(J\cup\{j\}) < 0}}w_i(J\cup\{j\}) \nonumber \\
&=  -\sum_{\substack{J\subset\{1,\ldots,n\}\setminus\{i,j\}\\|J|\leq m-1}}\min\{ 0, w_i(J\cup\{j\})\}.
\end{align}

\section{Proofs of Penalty Weight Lower Bounds}
\label{app:pen-bound-proofs}
\monotonicityhmax*
\begin{proof}
$\mathbf{d} \geq \mathbf{d}'$ implies $\mathbf{d}_i \geq \mathbf{d}_i'$ and thus $|\mathbf{d}_i|\geq |\mathbf{d}_i'|$ for $1\leq i \leq n$. By design, $\min_{\mathbf{y}_i}H_{\text{max}}(\mathbf{d}_i, \mathbf{y}_i) = \delta_{\text{max}}^{(i)}\max\{ 0, |\mathbf{d}_i| - m\}$. 
Let $\mathbf{y}^* \equiv \operatorname*{arg\,min}_{\mathbf{y}}H(\mathbf{d}, \mathbf{y})$. Then
\begin{align}
\min_{\mathbf{y}}H_{\text{max}}(\mathbf{d}, \mathbf{y}) &= 
\sum_{i=1}^n \min_{\mathbf{y}_i} H_{\text{max}}(\mathbf{d}_i, \mathbf{y}_i)  \nonumber\\
&= \sum_{i=1}^n \delta_{\text{max}}^{(i)}\max\{ 0, |\mathbf{d}_i| - m\} \nonumber\\
&\geq \sum_{i=1}^n \delta_{\text{max}}^{(i)}\max\{ 0, |\mathbf{d}_i'| - m\} \nonumber\\
&= \sum_{i=1}^n \min_{\mathbf{y}_i} H_{\text{max}}(\mathbf{d}_i', \mathbf{y}_i)  \nonumber\\
&= \min_{\mathbf{y}}H_{\text{max}}(\mathbf{d}', \mathbf{y}).
\end{align}
\end{proof}

\monotonicityhcycle*
\begin{proof}
In the statement of the claim, we implicitly assume that $\delta_{\text{consist}}^{(ij)}>0$ for all $1\leq i < j \leq n$. Let $\mathbf{r}^*=\operatorname*{arg\,min}_{\mathbf{r}} H_{\textnormal{cycle}}(\mathbf{d}, \mathbf{r})$. Because $d_{ji}\geq d_{ji}'$ for all $i, j$ such that$i\neq j$ and $1\leq i, j \leq n$, and because $0\leq r_{ij} \leq 1$ for all $1\leq i < j \leq n$,
\begin{align}
&{\min_{\mathbf{r}}H_{\text{cycle}}(\mathbf{d}, \mathbf{r})} \nonumber\\
&=\min_{\mathbf{r}}\left\{ H_{\textnormal{consist}}(\mathbf{d}, \mathbf{r}) + H_{\textnormal{trans}}(\mathbf{r})\right\} \nonumber\\
&= H_{\textnormal{consist}}(\mathbf{d}, \mathbf{r}^*) + H_{\textnormal{trans}}(\mathbf{r}^*) \nonumber\\
&= \left(\sum_{1\leq i < j \leq n} \delta_{\textnormal{consist}}^{(ij)} \left[d_{ji}r_{ij}^* + d_{ij}(1-r_{ij}^*)\right]\right) + H_{\text{trans}}(\mathbf{r}^*) \nonumber\\
&\geq \left(\sum_{1\leq i < j \leq n} \delta_{\textnormal{consist}}^{(ij)} \left[d_{ji}'r_{ij}^* + d_{ij}'(1-r_{ij}^*)\right]\right) + H_{\text{trans}}(\mathbf{r}^*) \nonumber\\
&=H_{\textnormal{consist}}(\mathbf{d}',\mathbf{r}^*) + H_{\textnormal{trans}}(\mathbf{r}^*) \nonumber\\
&\geq \min_{\mathbf{r}} \left[H_{\text{consist}}(\mathbf{d}', \mathbf{r}) + H_{\text{trans}}(\mathbf{r})\right] \nonumber\\
&=\min_{\mathbf{r}}H_{\text{cycle}}(\mathbf{d}', \mathbf{r})
\end{align}

\end{proof}

\maxpenaltyone*
\begin{proof}
We prove the existence of such a $\mathbf{d}'$ by construction. 
Let $\mathbf{d}_i'\equiv \mathbf{d}_i$ for all $i\neq i^*$. 
Let $\mathbf{d}_{i^*}' \equiv \left.\mathbf{d}_{i^*}\right|_{d_{j^*i^*}=0}$, where $j^*=\operatorname*{arg\,min}_{j\in \{j|d_{ji^*}=1\}}\Delta_{ji^*}$.
First, we note that by design $
\min_{\mathbf y_i} H_{\text{max}}^{(i)}(\mathbf d_i, \mathbf y_i)=\min\{0, \delta_{\text{max}}^{(i)}(|\mathbf d_i| - m)\}$. Thus
\begin{align}
&{\min_{\mathbf{y}_{i^*}}H_{\text{max}}^{(i^*)}(\mathbf{d}_{i^*}, \mathbf{y}_{i^*}) - \min_{\mathbf{y}_{i^*}}H_{\text{max}}^{(i^*)}(\mathbf{d}_{i^*}', \mathbf{y}_{i^*})} \nonumber\\ 
&= \left[\delta_{\text{max}}^{(i^*)}(|\mathbf{d}|-m)\right] + \left[\delta_{\text{max}}^{(i^*)}(|\mathbf{d}'|-m)\right] \nonumber\\
&= \delta_{\text{max}}^{(i^*)}(|\mathbf{d}|-|\mathbf{d}'|)  \nonumber\\
&= \delta_{\text{max}}^{(i^*)}  \nonumber\\
&> \max_{j\neq i^*}\Delta_{ji^*}  \nonumber\\
&\geq \Delta_{j^*i^*}  \nonumber\\
&\geq H_{\text{score}}^{(i^*)}(\mathbf{d}_{i^*}') - H_{\text{score}}^{(i^*)}(\mathbf{d}_{i^*}). 
\end{align}
which rearranges to
\begin{align}
&{H_{\text{score}}^{(i^*)}(\mathbf{d}_{i^*}) + \min_{\mathbf{y}_{i^*}}H_{\text{max}}^{(i^*)}(\mathbf{d}_{i^*}, \mathbf{y}_{i^*})} \nonumber\\
&> H_{\text{score}}^{(i^*)}(\mathbf{d}_{i^*}') + \min_{\mathbf{y}_{i^*}}H_{\text{max}}^{(i^*)}(\mathbf{d}_{i^*}', \mathbf{y}_{i^*}).
\end{align}

\noindent In context,
\begin{align}
&{\min_{\mathbf{y}}\left[H_{\text{score}}(\mathbf{d}) + H_{\text{max}}(\mathbf{d}, \mathbf{y})\right]} \nonumber\\
 &= H_{\text{score}}(\mathbf{d}) + \min_{\mathbf{y} }H_{\text{max}}(\mathbf{d}, \mathbf{y}) \nonumber\\
&=\sum_{i\neq i^*}\left[H_{\text{score}}^{(i)}(\mathbf{d}_i) + \min_{\mathbf{y}_i}H_{\text{max}}^{(i)}(\mathbf{d}_i, \mathbf{y}_i)\right] \nonumber\\
&\quad+ \left[H_{\text{score}}^{(i^*)}(\mathbf{d}_{i^*}) + \min_{\mathbf{y}_{i^*}}H_{\text{max}}^{(i^*)}(\mathbf{d}_{i^*}, \mathbf{y}_{i^*})\right]  \nonumber\\
&=\sum_{i\neq i^*}\left[H_{\text{score}}^{(i)}(\mathbf{d}_i') + \min_{\mathbf{y}_i}H_{\text{max}}^{(i)}(\mathbf{d}_i', \mathbf{y}_i)\right] \nonumber\\
&\quad+ \left[H_{\text{score}}^{(i^*)}(\mathbf{d}_{i^*}) + \min_{\mathbf{y}_{i^*}}H_{\text{max}}^{(i^*)}(\mathbf{d}_{i^*}, \mathbf{y}_{i^*})\right]  \nonumber\\
&> \sum_{i\neq i^*}\left[H_{\text{score}}^{(i)}(\mathbf{d}_i') + \min_{\mathbf{y}_i}H_{\text{max}}^{(i)}(\mathbf{d}_i', \mathbf{y}_i)\right] \nonumber\\
&\quad+ \left[H_{\text{score}}^{(i^*)}(\mathbf{d}_{i^*}') + \min_{\mathbf{y}_{i^*}}H_{\text{max}}^{(i^*)}(\mathbf{d}_{i^*}', \mathbf{y}_{i^*})\right] \nonumber \\
&= H_{\text{score}}(\mathbf{d}') + \min_{\mathbf{y} }H_{\text{max}}(\mathbf{d}', \mathbf{y}) \nonumber\\
&=\min_{\mathbf{y}}\left[H_{\text{score}}(\mathbf{d}') + H_{\text{max}}(\mathbf{d}', \mathbf{y})\right].
\end{align}

By Claim \ref{claim:Hmax-monotonicity} and the fact that $\mathbf d' \leq
\mathbf d$,
\begin{equation}
\min_{\mathbf r}H_{\text{max}}(\mathbf d, \mathbf r) \geq 
\min_{\mathbf r}H_{\text{max}}(\mathbf d', \mathbf r),
\end{equation}
and so
\begin{align}
&{\min_{\mathbf y, \mathbf r}H(\mathbf d, \mathbf y, \mathbf r)}  \nonumber\\
&= H_{\text{score}}(\mathbf d) + \min_{\mathbf y}H_{\text{max}}(\mathbf d, \mathbf y) +
\min_{\mathbf r}H_{\text{cycle}}(\mathbf d, \mathbf r)  \nonumber\\
&> H_{\text{score}}(\mathbf d') + \min_{\mathbf y}H_{\text{max}}(\mathbf d', \mathbf y) +
\min_{\mathbf r}H_{\text{cycle}}(\mathbf d', \mathbf r)  \nonumber\\
&= \min_{\mathbf y, \mathbf r}H(\mathbf d, \mathbf y, \mathbf r).
\end{align}
 
\end{proof}

\maxpenaltybound*
\begin{proof}
We prove the sufficiency of the given bound by iterative application of Claim \ref{claim:max-penalty-bound-1}.
Let $\mathbf d^{(0, 0)} \equiv \mathbf d$. For all $i$, if $|\mathbf{d}_i|>m$, let 
$\mathbf{d}^{(i, |\mathbf{d}_i|-m)} \equiv \mathbf{d}^{(i-1, 0)}$, and if
$|\mathbf d_i| \leq m$, let $\mathbf d^{(i, 0)}\equiv \mathbf d^{(i-1, 0)}$.
\\
For all $1\leq i \leq n$ and $x$ such that $1\leq i \leq n$ and $1\leq x \leq \max\{ 0, |\mathbf{d}_i| - m \}\}$, $|\mathbf{d}_i^{(i,x)}|>m$ and so by Claim \ref{claim:max-penalty-bound-1} there is a $\mathbf{d}^{(i,x-1)}\leq \mathbf{d}^{(i,x)}$ such that 
$|\mathbf{d}^{(i,x-1)}| = |\mathbf{d}^{(i,x)}| - 1$ 
and 
$\min_{\mathbf{y}, \mathbf{r}}H(\mathbf{d}^{(i, x-1)}, \mathbf{y}, \mathbf{r}) 
< \min_{\mathbf{y}, \mathbf{r}}H(\mathbf{d}^{(i, x)}, \mathbf{y}, \mathbf{r})$.
Then for all $i$, if $|\mathbf{d}_i|>m$, there is a sequence 
$\mathbf d^{(i, |\mathbf d_i| - m)}, \ldots, \mathbf d^{(i, x)}, \ldots, \mathbf d^{(i, 0)}$ 
such that 
$\mathbf{d}^{(i,0)}\leq \mathbf{d}^{(i,|\mathbf{d}_i|-m)}$ 
and 
$\min_{\mathbf{y}, \mathbf{r}}H(\mathbf{d}^{(i,0)}, \mathbf{y}, \mathbf{r}) 
< \min_{\mathbf{y}, \mathbf{r}}H(\mathbf{d}^{(i, |\mathbf{d}_i|-m)}, \mathbf{y}, \mathbf{r})$. 
Similarly, for all $i$, 
$\mathbf d^{(i, 0)} \leq \mathbf d^{(i-1, 0)}$ and
$\min_{\mathbf y, \mathbf r}H(\mathbf d^{(i, 0)} 
\leq \min_{\mathbf y, \mathbf r}H(\mathbf d^{(i-1)}, \mathbf y, \mathbf r)$,
with strict inequality if $|\mathbf d_i|>0$ and equality if $|\mathbf d_i|=0$. 
Thus there is a sequence $\mathbf d^{(0, 0)}, \mathbf d^{(1, 0)}, \ldots, \mathbf d^{(i, 0)}, \ldots, \mathbf d^{(n, 0)}$ such that
$\mathbf{d}^{(n,0)}\leq \mathbf d^{(0, 0)} =\mathbf{d}$, 
$\max_{i}|\mathbf{d}_i^{(n,0)}|\leq m$, and 
$\min_{\mathbf{y}, \mathbf{r}}H(\mathbf{d}', \mathbf{y}, \mathbf{r})
< \min_{\mathbf{y}, \mathbf{r}}H(\mathbf{d}, \mathbf{y}, \mathbf{r})$. 
Setting $\mathbf{d}'\equiv \mathbf{d}^{(n,0)}$ completes the proof.
\end{proof}

\removaltwocycle*
\begin{proof}
Let $\mathbf d^{(0)}\equiv \mathbf d$ and $l^*$ be the number of 2-cycles contained in $G(\mathbf d)$. 
The claim is proved iteratively by showing that for all $\mathbf d^{(l)}$ such that $G(\mathbf d^{(l)})$ contains a 2-cycle there exists some $\mathbf d^{(l + 1)}$ such that $\min_{\mathbf y, \mathbf r}H(\mathbf d^{(l)}, \mathbf y, \mathbf r) > \min_{\mathbf y, \mathbf r}(\mathbf d^{(l + 1)}, \mathbf y, \mathbf r)$ and $G(\mathbf d^{(l + 1)})$ contains one fewer 2-cycle. 
Because a graph of fixed order can only have a finite number of 2-cycles, this implies the existence of a sequence $\mathbf d, \mathbf d^{(1)}, \ldots, \mathbf d^{(l)}, \ldots, \mathbf d^{(l^*)}$ such that $\mathbf d^{(l^*)}$ meets the desiderata.

Consider an arbitrary $\mathbf d^{(l)}$.
If $G(\mathbf d^{(l)})$ does not contain a directed 2-cycle, then $l=l^*$ and so we set $\mathbf d'=\mathbf d^{(l^*)}$ to complete the proof.
Otherwise, choose some 2-cycle in $G(\mathbf d^{(l)})$ arbitrarily, i.e. some pair $\{i, j\}$ such that $(i, j), (j, i) \in E(G(\mathbf d))$, or, equivalently, that $d_{ij}=d_{ji}=1$. Without loss of generality, assume $i<j$. Let $\mathbf r^*\equiv \operatorname{arg\, min}_{\mathbf r}H_{\text{cycle}}(\mathbf d^{(l)}, \mathbf r)$ and 
\begin{equation}
(i^*, j^*) \equiv \begin{cases} (j, i), &r_{ij}^*=1, \\ (i, j), &r_{ij}^*=0, \end{cases}
\end{equation}
i.e. the arc in $G(\mathbf d^{(l)}$ inconsistent with $G(\mathbf r^*)$.
Define $\mathbf d^{(l + 1)}$ such that $d_{ij}^{(l+1)} = \begin{cases} d_{ij}, &(i, j) \neq (i^*, j^*) \\ 0=1- d_{ij}, &(i, j) = (i^*, j^*) \end{cases}$. Thus $\mathbf d^{(l+1)} \leq \mathbf d^{(l)}$ and $|\mathbf d^{(l + 1)}| = |\mathbf d^{(l)}| -1$. By construction, $G(\mathbf d^{(l+1)})$ contains one fewer 2-cycle than $G(\mathbf d^{(l)})$.  Furthermore,
\allowdisplaybreaks\begin{align}
&{\min_{\mathbf{r}}H_{\text{cycle}}(\mathbf{d}^{(l)}, \mathbf{r}) -\min_{\mathbf{r}}H_{\text{cycle}}(\mathbf{d}^{(l + 1)}, \mathbf{r})} \nonumber\\
&= H_{\text{cycle}}(\mathbf{d}^{(l)}, \mathbf{r}^*) -\min_{\mathbf{r}}H_{\text{cycle}}(\mathbf{d}^{(l + 1)}, \mathbf{r})\nonumber \\
&\geq H_{\text{cycle}}(\mathbf{d}^{(l)}, \mathbf{r}^*) - H_{\text{cycle}}(\mathbf{d}^{(l + 1)}, \mathbf{r}^*) \nonumber\\
&= \Bigg[\Bigg(\sum_{1\leq i < j \leq n} \delta_{\textnormal{consist}}^{(ij)} \left[d_{ji}^{(l)}r_{ij}^* + d_{ij}^{(l)}(1-r_{ij}^*)\right]\Bigg) \nonumber\\
&\quad + H_{\text{trans}}(\mathbf{r}^*)\Bigg] \nonumber\\
&\quad-\Bigg[ \Bigg(\sum_{1\leq i < j \leq n} \delta_{\textnormal{consist}}^{(ij)} \left[d_{ji}^{(l+1)}r_{ij}^* + d_{ij}^{(l+1)}(1-r_{ij}^*)\right]\Bigg) \nonumber\\
&\quad + H_{\text{trans}}(\mathbf{r}^*)\Bigg] \nonumber \\
&=\left(\sum_{1\leq i < j \leq n} \delta_{\textnormal{consist}}^{(ij)} \left[d_{ji}^{(l)}r_{ij}^* + d_{ij}^{(l)}(1-r_{ij}^*)\right]\right) \nonumber\\
& \quad -\left(\sum_{1\leq i < j \leq n} \delta_{\textnormal{consist}}^{(ij)} \left[d_{ji}^{(l+1)}r_{ij}^* + d_{ij}^{(l+1)}(1-r_{ij}^*)\right]\right)  \nonumber\\
&=\sum_{1\leq i < j \leq n} \delta_{\textnormal{consist}}^{(ij)} \left[(d_{ji}^{(l)}-d_{ji}^{(l+1)})r_{ij}^* \right.\nonumber\\
&\hspace{90pt} \left.+ (d_{ij}^{(l)}-d_{ij}^{(l+1)})(1-r_{ij}^*)\right]
\nonumber\\
&=
\begin{cases} \delta_{\text{consist}}^{(j^*i^*)} (d_{i^*j^*}^{(l)} - d_{i^*j^*}^{(l+1)})r_{j^*i^*}^*, & j^* < i^* \\
\delta_{\text{consist}}^{(i^*j^*)}(d_{i^*j^*}^{(l)} - d_{i^*j^*}^{(l+1)})(1-r_{i^*j^*}^*), & i^* < j^* \end{cases} \nonumber\\
&=
\begin{cases} \delta_{\text{consist}}^{(j^*i^*)}, & j^* < i^* \\
\delta_{\text{consist}}^{(i^*j^*)}, & i^* < j^* \end{cases} \nonumber\\
&>\Delta_{i^*j^*} \nonumber \\
&\geq H_{\text{score}}(\mathbf{d}^{(l+1)}) - H_{\text{score}}(\mathbf{d}^{(l)}).
\end{align}
By Claim \ref{claim:Hmax-monotonicity} and the fact that $\mathbf d^{(l+1)} \leq \mathbf d^{(l)}$, 
\begin{equation}
\min_{\mathbf y}H_{\text{max}}(\mathbf d^{(l)}) \geq \min_{\mathbf y}H_{\text{max}}(\mathbf d^{(l+1)}).
\end{equation}
Thus, 
\begin{align}
&{\min_{\mathbf y, \mathbf r}H(\mathbf d^{(l)}, \mathbf y, \mathbf r) - \min_{\mathbf y, \mathbf r}H(\mathbf d^{(l+1)}, \mathbf y, \mathbf r)} \nonumber\\
& =\Big(H_{\text{score}}(\mathbf d^{(l)}) + \min_{\mathbf r} H_{\text{cycle}}(\mathbf d^{(l)}, \mathbf r) \nonumber\\
&\qquad+ \min_{\mathbf y} H_{\text{max}}(\mathbf d^{(l)}, \mathbf y)\Big) \nonumber\\
&\qquad - \Big(H_{\text{score}}(\mathbf d^{(l+1)}) + \min_{\mathbf r} H_{\text{cycle}}(\mathbf d^{(l+1)}, \mathbf r) \nonumber\\
&\qquad \qquad + \min_{\mathbf y} H_{\text{max}}(\mathbf d^{(l+1)}, \mathbf y)\Big) \nonumber\\
& \geq \left(H_{\text{score}}(\mathbf d^{(l)}) + \min_{\mathbf r} H_{\text{cycle}}(\mathbf d^{(l)}, \mathbf r)\right) \nonumber\\
&\quad - \left(H_{\text{score}}(\mathbf d^{(l+1)}) + \min_{\mathbf r} H_{\text{cycle}}(\mathbf d^{(l+1)}, \mathbf r)\right) \nonumber\\
&> 0,
\end{align}
which rearranges to the desired inequality. \end{proof}

\hconsistpenalty*
\begin{proof}
We prove the claim via its contrapositive: for all $\mathbf{d}, \mathbf{r}$, if $H_{\text{consist}}(\mathbf{d}, \mathbf{r}) > 0$, there is some $\mathbf{r}'$ such that 
$H_{\text{cycle}}(\mathbf{d}, \mathbf{r}) > H_{\text{cycle}}(\mathbf{d}, \mathbf{r}')$, so $\mathbf{r}\neq \operatorname*{arg\,min}_{\mathbf{r}}H_{\text{cycle}}(\mathbf{d}, \mathbf{r})$.

Consider an arbitrary $\mathbf d$ and some $\mathbf r$ such that $H_{\text{consist}}(\mathbf d, \mathbf r) > 0$. The positivity of $H_{\text{consist}}(\mathbf d, \mathbf r)$ indicates that there is at least one inconsistency between $\mathbf d$ and $\mathbf r$, i.e. there is some $(i^*, j^*)$ such that 
$d_{i^*j^*}=\begin{cases} r_{j^*i^*}, & i^* > j^*\\ 1 - r_{i^*j^*}, & i^* < j^* \end{cases}=1$. For convenience, we prove the claim for the case in which $i^* < j^*$; the proof provided can be easily modified for the case in which $i^* > j^*$. Let $\mathbf r'$ be the same as $\mathbf r$ exept in the bit corresponding to this inconsistency: 
$r_{ij}' \equiv \begin{cases} r_{ij} & (i, j) \neq (i^*, j^*) \\ 1 - r_{ij}, & (i, j) = (i^*, j^*) \end{cases}$.
Then
\allowdisplaybreaks\begin{align}
&H_{\text{consist}}(\mathbf d, \mathbf r) - H_{\text{consist}}(\mathbf d, \mathbf r') \nonumber\\
&= \sum_{1\leq i < j \leq n} \left[H_{\text{consist}}^{(ij)}(d_{ij}, d_{ji}, r_{ij}) \right. \nonumber\\
&\hspace{60pt} \left.- H_{\text{consist}}^{(ij)}(d_{ij}, d_{ji}, r_{ij}')\right] \nonumber\\
&= H_{\text{consist}}^{(i^*j^*)}(d_{i^*j^*}, d_{j^*i^*}, r_{i^*j^*}) \nonumber\\
&\qquad -H_{\text{consist}}^{(i^*j^*)}(d_{i^*j^*}, d_{j^*i^*}, r_{i^*j^*}')\\
&=\delta_{\text{consist}}^{(i^*j^*)}\left[d_{j^*i^*} r_{i^*j^*} + d_{i^*j^*}(1-r_{i^*j^*})\right] \nonumber \\
&\quad - \delta_{\text{consist}}^{(i^*j^*)} \left[d_{j^*i^*} r_{i^*j^*}' + d_{i^*j^*}(1-r_{i^*j^*}')\right] \nonumber \\
&=\delta_{\text{consist}}^{(i^*j^*)}\left[-d_{j^*i^*} + d_{i^*j^*}\right]
\nonumber \\
&=\delta_{\text{consist}}^{(i^*j^*)}.
\end{align}
Furthermore,
\begin{align}
&{H_{\text{trans}}(\mathbf r) - H_{\text{trans}}(\mathbf r')} \nonumber\\
&= \sum_{1\leq i < j < k \leq n}H_{\text{trans}}^{(ijk)}(r_{ij}, r_{ik}, r_{jk})\nonumber\\
&\qquad- \sum_{1\leq i < j < k \leq n}H_{\text{trans}}^{(ijk)}(r_{ij}', r_{ik}', r_{jk}') \nonumber\\
&=\sum_{\substack{1\leq i < j < k \leq n \\ \{i^*, j^*\} \subset \{i, j ,k\}}}
\Big[H_{\text{trans}}^{(ijk)}(r_{ij}, r_{ik}, r_{jk}) \\
&\hspace{90pt}- H_{\text{trans}}^{(ijk)}(r_{ij}', r_{ik}', r_{jk}') \Big] \nonumber \\
&=\sum_{k<i^*} \delta_{\text{trans}}^{(ki^*j^*)}\big[
\left(r_{kj^*}+r_{ki^*}r_{i^*j^*} - r_{ki^*}r_{kj^*} - r_{i^*j^*}r_{kj^*} \right)\nonumber\\
&\quad - \left(r_{kj^*}'+r_{ki^*}'r_{i^*j^*}' - r_{ki^*}'r_{kj^*}' - r_{i^*j^*}'r_{kj^*}' \right)\big] \nonumber\\
&\quad+ \sum_{i^*<k<j^*} \delta_{\text{trans}}^{(i^*kj^*)} \big[
\left(r_{i^*j^*}+r_{i^*k}r_{kj^*} \right.\\
&\hspace{100pt} \left.- r_{i^*k}r_{i^*j^*} - r_{kj^*}r_{i^*j^*} \right) \nonumber\\
&\quad - \left(r_{i^*j^*}'+r_{i^*k}'r_{kj^*}' - r_{i^*k}'r_{i^*j^*}' - r_{kj^*}'r_{i^*j^*}' \right)\big] \nonumber\\
&\quad+ \sum_{j^*<k} \delta_{\text{trans}}^{(i^*j^*k)} \big[
\left(r_{i^*k}+r_{i^*j^*}r_{j^*k} \right. \nonumber\\
&\hspace{100pt} \left.- r_{i^*j^*}r_{i^*k} - r_{j^*k}r_{i^*k} \right) \nonumber\\
&\quad - \left(r_{i^*k}'+r_{i^*j^*}'r_{j^*k}' - r_{i^*j^*}'r_{i^*k}' - r_{j^*k}'r_{i^*k}' \right) \big] \nonumber\\
&=\sum_{k<i^*} \delta_{\text{trans}}^{(ki^*j^*)}
\left(-r_{ki^*} + r_{kj^*} \right)\nonumber\\
&\quad+ \sum_{i^*<k<j^*} \delta_{\text{trans}}^{(i^*kj^*)} 
\left(-1 + r_{i^*k} + r_{kj^*}\right) \nonumber\\
&\quad+ \sum_{j^*<k} \delta_{\text{trans}}^{(i^*j^*k)} 
\left(-r_{j^*k} + r_{i^*k}\right) \nonumber\\
&\leq \sum_{k<i^*} \delta_{\text{trans}}^{(ki^*j^*)}
+ \sum_{i^*<k<j^*} \delta_{\text{trans}}^{(i^*kj^*)} 
+ \sum_{j^*<k} \delta_{\text{trans}}^{(i^*j^*k)} \nonumber \\
&\leq (n-2)\max_{k\notin \{i^*, j^*\}}\delta_{\text{trans}}^{(i^*j^*k)}.
\end{align}

Together, the above imply
\begin{align}
&{H_{\text{cycle}}(\mathbf d, \mathbf r) -H_{\text{cycle}}(\mathbf d, \mathbf r')} \nonumber \\
&=H_{\text{consist}}(\mathbf d, \mathbf r) - H_{\text{consist}}(\mathbf d, \mathbf r') 
+ H_{\text{trans}}(\mathbf r) - H_{\text{trans}}(\mathbf r') \nonumber\\
&\geq \delta_{\text{consist}}^{(i^*j^*)} - (n-2)\max_{k\notin \{i, j\}}\delta_{\text{trans}}^{(ijk)} \nonumber\\
&>0.
\end{align}

\end{proof}

\htranspenalty*
\begin{proof}
Consider an arbitrary $\mathbf d^{(l)}$ such that $G(\mathbf d^{(l)})$ does not
contain a 2-cycle but does contain a directed cycle. Let $\mathbf r^{(l)}\equiv
\operatorname{arg\,min}_{\mathbf r}H_{\text{cycle}}(\mathbf d^{(l)}, \mathbf
r)$. By Claim \ref{claim:consistency-penalty-bound},
$H_{\text{consist}}(\mathbf d^{(l)}, \mathbf r^{(l)})=0$ and so
$H_{\text{cycle}}(\mathbf d^{(l)}, \mathbf r^{(l)}) = 
H_{\text{trans}}(\mathbf d^{(l)}, \mathbf r^{(l)})$. 

If $\delta_{\text{trans}}^{(ijk)}=\delta_{\text{trans}}$ 
for $1\leq i < j < k \leq n$, 
i.e. the trasitivity penalty weight is uniform for all directed triangles, then $H_{\text{trans}}(\mathbf r)$ is equal to the product of $\delta_{\text{trans}}$ and the number of directed triangles in the tournament $G(\mathbf r)$ for all $\mathbf r$. 

In any tournament with a positive number of directed triangles, there is always some arc whose switch of direction lowers the number of directed triangles. 
Let $(i^*, j^*)$ be such such an arc for $\mathbf r^{(l)}$. 
Define $\tilde{\mathbf r}^{(l)}$ such that 
$\tilde r_{ij}^{(l)} = \begin{cases} 
r_{ij}^{(l)},& (i,j) \neq (i^*,j^*)\\ 
1-r_{ij}^{(l)},& (i,j) = (i^*,j^*) \end{cases} $.
By construction, 
$H_{\text{trans}}(\mathbf r_l^*) - 
 H_{\text{trans}}(\tilde{\mathbf r}^{(l)}) \geq 
\delta_{\text{trans}}$.

It must be the case that $d_{i^*j^*}^{(l)}=1$. Suppose otherwise. Define some
$\mathbf r'$ such that $r_{ij}'=\begin{cases}
r_{ij}, &(i, j) \neq (i^*, j^*), \\
1-r_{ij}, &(i, j) = (i^*, j^*),\end{cases}$, which would have the properties
that
$H_{\text{consist}}(\mathbf d^{(l)}, \mathbf r')=0$ 
and, by construction,
$H_{\text{trans}}(\mathbf r') < H_{\text{trans}}(\mathbf r^{(l)}$, so that
$H_{\text{cycle}}(\mathbf d^{(l)}, \mathbf r') 
< H_{\text{cycle}}(\mathbf d^{(l)}, \mathbf r^{(l)}) 
\neq \min_{\mathbf r} H_{\text{cycle}}(\mathbf d^{(l)}, \mathbf r)$. Because
$G(\mathbf d^{(l)})$ does not contain a 2-cycle, $d_{j^*i^*}^{(l)}=0$.

Now, define $\mathbf d^{(l+1)}$ such that 
$d_{ij}^{(l+1)}=\begin{cases} d_{ij}^{(l)}, &(i, j) \neq (i^*, j^*),\\
0=1-d_{ij}^{(l)}, & (i, j) = (i^*, j^*).\end{cases}$
Then
\begin{align}
&{H_{\text{consist}}(\mathbf d^{(l+1)}, \tilde{\mathbf r}^{(l)})} \nonumber \\
&=\sum_{\substack{1\leq i < j \leq n \\ (i,j)\neq (i^*, j^*)}}
H_{\text{consist}}^{(ij)}(d_{ij}^{(l+1)}, d_{ji}^{(l+1)}, \tilde r_{ij}^{(l)}) \nonumber\\
&\qquad + H_{\text{consist}}^{(i^*j^*)}(d_{i^*j^*}^{(l+1)}, d_{j^*i^*}^{(l+1)}, \tilde r_{i*j*}^{(l)})  \nonumber\\
&=\sum_{\substack{1\leq i < j \leq n \\ (i,j)\neq (i^*, j^*)}}
H_{\text{consist}}^{(ij)}(d_{ij}^{(l)}, d_{ji}^{(l)}, r_{ij}^{(l)}) \nonumber\\
&\qquad + H_{\text{consist}}^{(i^*j^*)}(0, 0, r_{i*j*}^{(l)})  \nonumber\\
&\leq H_{\text{consist}}(\mathbf d^{(l)}, \mathbf r^{(l)})  \nonumber\\
&= 0
\end{align}
Because of this, it must be that 
$H_{\text{trans}}(\tilde{\mathbf r}^{(l)}) 
\geq H_{\text{trans}}(\mathbf r^{(l+1)})$, 
whose negation contradicts the definition of $\mathbf r^{(l+1)}$.
Therefore,
\begin{align}
&{H_{\text{trans}}(\mathbf r^{(l)}) - H_{\text{trans}}(\mathbf r^{(l+1)})} \nonumber\\
&\geq H_{\text{trans}}(\mathbf r^{(l)}) - H_{\text{trans}}(\tilde{\mathbf
r}^{(l)})
\nonumber\\
&\geq \delta_{\text{trans}} \nonumber\\
&>\Delta_{i^*,j^*} \nonumber \\
&\geq H_{\text{score}}(\mathbf d^{(l+1)}) - H_{\text{score}}(\mathbf d^{(l)}).
\end{align}

Because $\mathbf d^{(l+1)} \leq \mathbf d^{(l)}$, $G(\mathbf d^{(l+1)}$ also
does not contain a 2-cycle and, by Claim
\ref{claim:Hmax-monotonicity}, 
\begin{equation}
\min_{\mathbf y}H_{\text{max}}(\mathbf d^{(l+1)}, \mathbf y) \leq 
\min_{\mathbf y}H_{\text{max}}(\mathbf d^{(l)}, \mathbf y),
\end{equation}
which, together with the above, implies
\begin{align}
&{\min_{\mathbf y, \mathbf r}H(\mathbf d^{(l)}, \mathbf y, \mathbf r) - 
\min_{\mathbf y, \mathbf r}H(\mathbf d^{(l+1)}, \mathbf y, \mathbf r) } \nonumber\\
&= H_{\text{score}}(\mathbf d^{(l)}) - H_{\text{score}}(\mathbf d^{(l+1)})  \nonumber\\
&\quad + \min_{\mathbf y}H_{\text{max}}(\mathbf d^{(l)}, \mathbf y) - 
       \min_{\mathbf y}H_{\text{max}}(\mathbf d^{(l+1)}, \mathbf y)  \nonumber\\
&\quad + \min_{\mathbf r}H_{\text{cycle}}(\mathbf d^{(l)}, \mathbf r) - 
       \min_{\mathbf r}H_{\text{cycle}}(\mathbf d^{(l+1)}, \mathbf r)  \nonumber\\
&\geq H_{\text{score}}(\mathbf d^{(l)}) - H_{\text{score}}(\mathbf d^{(l+1)})  \nonumber\\
&\quad + H_{\text{trans}}(\mathbf r^{(l)}) - H_{\text{trans}}(\mathbf
r^{(l+1)}) \nonumber\\
&> 0.
\end{align}
Let $\mathbf d^{(0)} \equiv \mathbf d$. 
Because there can be only finitely many directed triangles in a graph of fixed order, we can construct a sequence 
$\mathbf d, \mathbf d^{(1)}, \ldots, \mathbf d^{(l)}, \ldots, \mathbf d^{(l^*)}$ 
such that 
$\min_{\mathbf y, \mathbf r}H(\mathbf d, \mathbf y, \mathbf r) > 
 \min_{\mathbf y, \mathbf r}H(\mathbf d^{(l^*)}, \mathbf y, \mathbf r)$ 
and $G(\mathbf r_{l^*})$ does not contain directed triangle. 
Thus 
$H_{\text{trans}}(\mathbf r_{l^*}) 
= H_{\text{cycle}}(\mathbf d^{(l^*)}, \mathbf r_{l^*}^*) = 0$, 
which means that $G(\mathbf d^{(l^*)})$ is a DAG. 
Setting $\mathbf d'\equiv \mathbf d^{(l^*)}$ completes the proof.

\end{proof}

\hcyclepenalty*
\begin{proof}
If $G(\mathbf d)$ contains a 2-cycle, then by Claim \ref{claim:2-cycle-removal}, there is some $\mathbf d''\leq \mathbf d$ such that $G(\mathbf d'')$ does not contain a directed 2-cycle and 
\begin{equation}
\min_{\mathbf y, \mathbf r}H(\mathbf d, \mathbf y, \mathbf r) > 
\min_{\mathbf y, \mathbf r}H(\mathbf d'', \mathbf y, \mathbf r).
\end{equation}
If $G(\mathbf d'')$ is a DAG, then setting $\mathbf d' \equiv \mathbf d''$
completes the proof. If $G(\mathbf d)$ does not contain a 2-cycle, set
$\mathbf d''\equiv \mathbf d$. Then by Claim
\ref{claim:transitivity-penalty-bound}, there is a $\mathbf d'\leq \mathbf d''$ such that $G(\mathbf d')$ is a DAG and 
\begin{equation}
\min_{\mathbf y, \mathbf r}H(\mathbf d'', \mathbf y, \mathbf r) > 
\min_{\mathbf y, \mathbf r}H(\mathbf d', \mathbf y, \mathbf r).
\end{equation}
\end{proof}

\overallpenalty*
\begin{proof}
Consider an arbitrary $\mathbf{d}$. 
If $\max_i|\mathbf d_i| > m$, then by Claim \ref{claim:max-penalty-bound}, there exists some $\mathbf{d}'$ such that
\begin{equation}
\min_{\mathbf y, \mathbf r}H(\mathbf d, \mathbf y, \mathbf r) > 
\min_{\mathbf y, \mathbf r}H(\mathbf d', \mathbf y, \mathbf r)
\end{equation}
and $\min_{\mathbf y}H_{\text{max}}(\mathbf d')=0$, i.e. $\max_i |\mathbf d'_i|\leq m$. If $G(\mathbf d)$ has a directed cycle, then by Claim \ref{claim:cycle-penalty-bound}, there is some $\mathbf d''$ such that 
\begin{equation}
\min_{\mathbf y, \mathbf r}H(\mathbf d, \mathbf y, \mathbf r) > 
\min_{\mathbf y, \mathbf r}H(\mathbf d'', \mathbf y, \mathbf r)
\end{equation}
and $\min_{\mathbf r}H_{\text{cycle}}(\mathbf d'', \mathbf r)=0$, i.e. $G(\mathbf d'')$ is a DAG. Either of these cases implies that $\mathbf d\neq \mathbf d^*$, so it must be that $G(\mathbf d^*)$ is a DAG, $\max_i|\mathbf d_i^*| \leq m$, and
$H(\mathbf{d}^*, \mathbf y, \mathbf r) = 
\min_{\substack{\max_i|\mathbf{d}_i|\leq m \\ G(\mathbf{d})\, \textnormal{is a DAG}}} H_{\textnormal{score}}(\mathbf{d}, \mathbf y, \mathbf r)$.
\end{proof}

\bibliographystyle{unsrt}
\bibliography{bn,qa,solar_flares,bio}
\end{multicols}

\end{document}